\renewcommand{\atop}[2]{\genfrac{}{}{0pt}{}{#1}{#2}}
\newtheorem{lemma}{Lemma}
\newtheorem{thm}{Theorem}
\newtheorem{claim}{Claim}
\newtheorem{fact}{Fact}
\newcommand{\N}{\mathbb{N}}
\newcommand{\x}{\mathbf{x}}
\newcommand{\y}{\mathbf{y}}
\newcommand{\mb}[1]{\boldsymbol{#1}} 
\newcommand{\ra}{\rangle}
\newcommand{\la}{\langle}
\newcommand{\be}{\begin{equation}}
\newcommand{\ee}{\end{equation}}
\newcommand{\ber}{\begin{eqnarray}}
\newcommand{\eer}{\end{eqnarray}}
\newcommand{\ket} [1] {\vert #1 \rangle}
\newcommand{\bra} [1] {\langle #1 \vert}
\newcommand{\braket}[2]{\langle #1 | #2 \rangle}
\newcommand{\proj}[1]{\ket{#1}\bra{#1}}
\newcommand{\id}{\mb{1}}
\newcommand{\norm}[1]{||#1||}
\newcommand{\mean}[1]{\left\langle #1 \right\rangle}
\newcommand{\var}[1]{\sigma^2\left( #1 \right)}
\newcommand{\typ}[1]{{#1}_{\mathrm{typ}}}
\begin{document}

\title{Adiabatic quantum optimization fails for random instances of NP-complete problems}
\author{\sc Boris Altshuler}\email{bla@phys.columbia.edu}
\affiliation{Columbia University}
\affiliation{NEC Laboratories America Inc.}
\author{\sc Hari Krovi}\email{hari.krovi@uconn.edu}
\affiliation{NEC Laboratories America Inc.}
\author{\sc Jeremie Roland}\email{jroland@nec-labs.com}
\affiliation{NEC Laboratories America Inc.}
\date{\today}

\begin{abstract}
Adiabatic quantum optimization has attracted a lot of attention because small scale simulations gave hope that it would allow to solve NP-complete problems efficiently. Later, negative results proved the existence of specifically designed hard instances where adiabatic optimization requires exponential time. In spite of this, there was still hope that this would not happen for random instances of NP-complete problems. This is an important issue since random instances are a good model for hard instances that can not be solved by current classical solvers, for which an efficient quantum algorithm would therefore be desirable. Here, we will show that because of a phenomenon similar to Anderson localization, an exponentially small eigenvalue gap appears in the spectrum of the adiabatic Hamiltonian for large random instances, very close to the end of the algorithm. This implies that unfortunately, adiabatic quantum optimization also fails for these instances by getting stuck in a local minimum, unless the computation is exponentially long.
\end{abstract}

\maketitle

\section{Introduction}
Adiabatic quantum computing is a computational paradigm (introduced in~\cite{farh00}) where the solution to an optimization problem is encoded in the ground state of some Hamiltonian $H_P$. An adiabatic algorithm would proceed as follows: prepare the ground state of another Hamiltonian $H_0$ (chosen so that its ground state is easy to prepare), then slowly modify the Hamiltonian of the system from $H_0$ to $H_P$, using an interpolation $H(s)=(1-s)H_0 +s H_P$. If this is done slowly enough, the Adiabatic Theorem of Quantum Mechanics~\cite{messiah} ensures that the system will stay close to the ground state of the instantaneous Hamiltonian throughout the evolution, so that we finally obtain a state close to the ground state of $H_P$. At this point, measuring the state will give us the solution of our problem with high probability. To put this in more quantitative terms, if the problem size (number of bits) is $N$, then one requires the instantaneous eigenvalue gap $\Delta(s)$ between the ground state and the first excited state to be inverse polynomial in $N$ at each step $s$ of the evolution. The computation time $T$ scales as the inverse square of the gap $T\sim 1/\Delta^2$, where $\Delta=\min_s \Delta(s)$. This means that if the eigenvalue gap becomes exponentially small at any point in the evolution, then the computation requires exponential time. This eigenvalue gap provides a possible advantage of adiabatic quantum computing compared to the usual model based on quantum circuits. Since the system stays in its ground state throughout the evolution, robustness against thermal noise and decoherence could be provided by the eigenvalue gap~\cite{chil01,rc05,Lidar08}. It was also shown that adiabatic quantum computing is universal for quantum computing~\cite{avkll04}, i.e., any algorithm expressed as a quantum circuit may be translated into an adiabatic algorithm, and vice versa.

Adiabatic quantum computing was first proposed as a new approach to solve hard optimization problems, and has attracted a lot of attention because numerical evidence~\cite{fggllp01} seemed to indicate that the time required to solve NP-complete problems scaled only polynomially with the problem size, at least for small sizes. But later work gave strong evidence that this may not be the case. Refs.~\cite{Farhi_fail,Farhi_SA} show that adiabatic algorithms can fail if one does not choose the initial Hamiltonian carefully by taking into account the structure of the problem. In Ref.~\cite{Reich} the Hamiltonians of certain instances of 3-SAT were mapped to an Ising model and diagonalized analytically. It was shown that the gap is exponentially small in some cases. Refs.~\cite{vdam01} and \cite{vdam2} construct special instances of 3-SAT which are hard for the adiabatic algorithm to solve. More recently, it was shown that very small gaps could appear in the spectrum of the Hamiltonian due to an avoided crossing between the ground state and another level corresponding to a local minimum of the optimization problem~\cite{AminChoi,mit}.
However, these results show the failure of adiabatic quantum optimization for specifically designed hard instances. In this paper, we show that adiabatic quantum optimization fails with high probability for randomly generated instances of the NP-complete problem Exact Cover 3 (EC3), also known as 1-in-3 SAT. Since the core of the argument leading to this conclusion only relies on general properties shared by other NP-complete problems such as 3-SAT, this provides a strong evidence that adiabatic quantum optimization typically cannot solve hard instances of NP-complete problems efficiently.

Our argument relies on different elements. In Section~\ref{sec:preliminaries}, we introduce the problem Exact Cover 3 and the adiabatic algorithm proposed to solve it. In Section~\ref{sec:statistics}, we study some statistical properties of random instances of EC3, which will be crucial to the result.  In Section~\ref{sec:perturbation}, we study the perturbation expansion of eigenenergies of the adiabatic Hamiltonian.  In Section~\ref{sec:avoided-crossing}, we show that perturbation theory predicts an avoided crossing and therefore a small gap occurring close to the end of the adiabatic evolution, where perturbation theory becomes valid. We also performed numerical simulations to confirm the predictions of perturbation theory and estimate the position of the avoided crossing. Finally, in Section~\ref{sec:gap-scaling}, we show that this avoided crossing induces an exponentially small gap. Since these results rely on perturbation expansions, we discuss in Section~\ref{sec:applicability} the applicability of perturbation theory. We show how this problem is intimately related to the phenomenon of Anderson localization~\cite{anderson58}, which implies that the eigenstates of the Hamiltonian are localized for small perturbation, corresponding in our case to the end of the algorithm, close to $s=1$. An important observation is that the adiabatic Hamiltonian $H(s)$ has exactly the same form as the model used by Anderson to describe localization, except that the particle evolves on an $N$-dimensional hypercube instead of a $d$-dimensional lattice. This emphasizes the relevance of Anderson localization for the study of quantum algorithms, even though it is currently not commonly used in this context, one example being Ref.~\cite{keating} where it has been used to show weaknesses in quantum walk algorithms.

\section{Preliminaries}\label{sec:preliminaries}
\subsection{Exact Cover}
Exact Cover 3 (EC3) is an NP-complete problem which was considered for adiabatic algorithms in~\cite{fggllp01}. Consider an $N$-bit string  $\x=(x_1,x_2,\dots ,x_N)$, where $x_i\in\{0,1\}$. An instance of EC3 consists of many clauses each containing three bits. A clause $C=(x_{i_C},x_{j_C},x_{k_C})$ is said to be satisfied if and only if one of the three bits is one and the other two are zero, i.e., $x_{i_C}+x_{j_C}+x_{k_C}=1$. A solution is an assignment of the bits which satisfies all the clauses. The solutions of an instance of EC3 can be encoded into a cost function given by
\be\label{eq:cost-function}
f(\x)=\sum_C (x_{i_C}+x_{j_C}+x_{k_C}-1)^2.
\ee
A solution is therefore an assignment which yields a zero cost.

We will call \emph{random instances} of EC3 with $N$ bits and $M$ clauses those generated by picking uniformly at random $M$ clauses of three bits, with replacement. This distribution of instances is important since it generates hard instances that can not be solved by current classical solvers. The hardness of random instances depends highly on the clauses-to-variables ratio $\alpha=\frac{M}{N}$. As $\alpha$ increases, we observe two phase transitions~\cite{biroli,zdeborova}. For low $\alpha$, the density of the solutions is high and essentially uniform. As $\alpha$ increases to the \emph{clustering threshold} $\alpha_c$, the first phase transition occurs where the problem goes from having many solutions to clustered solutions with different clusters isolated from each other. The other phase transition occurs at the \emph{satisfiability threshold} $\alpha_s$ after which the problem is unsatisfiable with high probability. Therefore, the hard instances with only a few isolated solutions, which cannot be solved efficiently by known classical algorithms, lie just before this second phase transition, and this is the regime we will be interested in. The satisifiability threshold of EC3 has been studied in~\cite{raymond07}, where it is shown that $\alpha_s=0.6263\pm 10^{-4}$.


\subsection{Adiabatic Algorithm for Exact Cover}
To design an adiabatic quantum algorithm for this problem, we build a problem Hamiltonian $H_P$ acting on a space of $N$ qubits such that each state $\ket{\x}$ of the computational basis is an eigenstate of $H_P$ with energy $E_\x=f(\x)$.
Using the mapping $x_i\to(1-\sigma_z^{(i)})/2$ in Eq.~(\ref{eq:cost-function}), where $\sigma_z^{(i)}$ is the Pauli operator $\sigma_z$ acting on the $i$-th bit, we get the following expression for the problem Hamiltonian
\ber\label{eq:hp}
H_P&=&M\id-\frac{1}{2}\sum_{i=1}^N B_i\sigma_z^{(i)} + \frac{1}{4}\sum_{i,j=1}^N J_{ij}\sigma_z^{(i)}\sigma_z^{(j)},
\eer
where $\id$ is the identity operator, $M$ is the total number of clauses, $B_i$ is the number of clauses in which the bit $i$ participates and $J_{ij}$ is the number of clauses where the bits $i$ and $j$ participate together (so $J_{ij}=J_{ji}$ and we set $J_{ii}=0$ for convenience). The solution to the EC3 instance is now given by the ground state of $H_P$. As for the initial Hamiltonian $H_0$, a standard choice is
\be\label{eq:h0}
H_0=-\sum_{i=1}^N \sigma_x^{(i)} ,
\ee
where $\sigma_x^{(i)}$ is the Pauli operator $\sigma_x$ on the $i$-th bit. $H_0$ is therefore a 1-local Hamiltonian accepting as unique ground state the uniform superposition
\be
\ket{\psi_0}=2^{-\frac{N}{2}}\sum_{\x\in\{0,1\}^N}\ket{\x}.
\ee
The adiabatic quantum algorithm consists in preparing $\ket{\psi_0}$, applying the Hamiltonian $H_0$ and slowly modifying the system Hamiltonian, following an interpolation
\be
H(s(t))=(1-s(t))H_0+s(t)H_P ,
\ee
where $s(t)=t/T$ and $T$ is the computation time. Let $\Delta(s)$ be the eigenvalue gap between the ground state and the first excited state of $H(s)$. If $T$ is large compared to $1/\Delta^2$, where $\Delta=\min_s \Delta(s)$, we will obtain a state close to the ground state of $H_P$ at the end of the evolution, so that a measurement in the computational basis will yield the solution to the problem with high probability. To evaluate the complexity of this algorithm, we therefore need to find the minimum gap of $H(s)$.

\section{Statistical properties of random instances}\label{sec:statistics}
Since we are are interested in random instances of EC3, our results will rely on some statistical properties of such instances. Recall that a random instance is obtained by picking uniformly and independently $M$ clauses of $3$ bits among a set of $N$ bits. Let us study the statistical properties of such instances in the limit of large $N$, for fixed clauses-to-variables ratio $\alpha=\frac{M}{N}$.

From Eq.~(\ref{eq:hp}), we see that an EC3 instance over $N$ bits is completely specified by the $N\times N$ matrix $(J_{ij})_{i,j=1}^N$ (note that $B_i=\frac{1}{2}\sum_j J_{ij}$ and $M=\frac{1}{3}\sum_i B_i$). Such a matrix defines a graph $G$ over $N$ vertices such that there is an edge $(i,j)$ if and only if $J_{ij}\neq 0$. We will now show that for random instances, the local properties of $G$ are independent of $N$.

Let us first focus on the degree of the graph. Since each clause involves two other bits, the degree of vertex $i$ is at most twice $B_i$, the number of clauses involving bit $i$. Since the probability that bit $i$ appears in one random clause is $3/N$, and the $M$ clauses are picked uniformly at random and with replacement, $B_i$ follows a binomial distribution
\be\label{eq:proba-b}
\Pr[B_i=b]=\binom{M}{b}\left(\frac{3}{N}\right)^b\left(1-\frac{3}{N}\right)^{M-b}.
\ee
In the limit $N\to\infty$, for fixed $b$ and $\alpha=\frac{M}{N}$, we have 
\be\label{eq:lim-proba-b}
\lim_{N\to\infty}\Pr[B_i=b]=e^{-3\alpha}\frac{(3\alpha)^b}{b!}.
\ee
The fact that this distribution converges for large $N$, as well as other properties of the graph $G$, will be crucial to our results. In particular, the fact that $B_i$ follows the binomial distribution in Eq.~(\ref{eq:proba-b}) immediately implies the following (we denote by $\mean{V}$ and $\var{V}$ the mean value and variance of a random variable $V$).
\begin{fact}\label{fact:proba-b}
For random EC3 instances with $\alpha=\frac{M}{N}$, we have $\mean{B_i}=3\alpha$ and $\var{B_i}=3\alpha(1-\frac{3}{N})$.
\end{fact}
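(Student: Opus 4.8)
The plan is to obtain both moments directly from the binomial law in Eq.~(\ref{eq:proba-b}), either by quoting the standard formulas for a $\mathrm{Bin}(M,p)$ variable or, equivalently, by a one-line indicator decomposition which makes the independence structure explicit. I would take the second route, since it also fixes notation that is convenient later: write $B_i=\sum_{c=1}^{M}X_c$, where $X_c$ is the indicator that bit $i$ participates in the $c$-th clause. Because the $M$ clauses are drawn uniformly and independently (with replacement), the $X_c$ are i.i.d.\ Bernoulli variables, and the per-clause probability that bit $i$ appears is $\mean{X_c}=3/N$, which is exactly the parameter already appearing in Eq.~(\ref{eq:proba-b}).

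The mean then follows from linearity of expectation: $\mean{B_i}=\sum_{c=1}^{M}\mean{X_c}=M\cdot\frac{3}{N}=3\alpha$, using $\alpha=M/N$. For the variance, independence of the $X_c$ gives $\var{B_i}=\sum_{c=1}^{M}\var{X_c}$; since $\var{X_c}=\frac{3}{N}\bigl(1-\frac{3}{N}\bigr)$ for a Bernoulli variable, one gets $\var{B_i}=M\cdot\frac{3}{N}\bigl(1-\frac{3}{N}\bigr)=3\alpha\bigl(1-\frac{3}{N}\bigr)$, which is the claim.

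There is essentially no obstacle here: the statement is an immediate corollary of Eq.~(\ref{eq:proba-b}). The only point worth a moment's care is the value $3/N$ for the single-clause inclusion probability, but this has already been justified in the text preceding Eq.~(\ref{eq:proba-b}) (a bit belongs to a clause iff it occupies one of the three slots), so nothing further is required. One could alternatively verify the two formulas by direct summation against the binomial weights $\binom{M}{b}\bigl(\frac{3}{N}\bigr)^{b}\bigl(1-\frac{3}{N}\bigr)^{M-b}$, but the indicator argument avoids that bookkeeping.
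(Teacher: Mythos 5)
Your proof is correct and is essentially the paper's argument: the paper simply reads the mean and variance off the binomial law $\mathrm{Bin}(M,3/N)$ in Eq.~(\ref{eq:proba-b}), and your indicator decomposition is just the standard derivation of those binomial moments. Nothing is missing.
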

From Markov's inequality, this implies that $B_i$, and in turn the degree of each vertex, remains bounded with high probability in the limit $N\to\infty$.

From Eq.~(\ref{eq:lim-proba-b}), we also see that $\lim_{N\to\infty}\Pr[B_i=0]=e^{-3\alpha}$, so that a given bit will not appear in any clause with probability $e^{-3\alpha}$. This implies that when we generate a random instance with $N$ bits, only a fraction of the bits will actually play a role in the instance. For a given random instance, let $N'$ be the number of bits present in some clause. From Eq.~(\ref{eq:proba-b}), we can show that the fraction of present bits $N'/N$ becomes more and more peaked around its mean value $1-e^{-3\alpha}$ in the limit $N\to\infty$.
\begin{fact}
 $\lim_{N\to\infty} \frac{\mean{N'}}{N}=1-e^{-3\alpha}$ and $\lim_{N\to\infty} \frac{\var{N'}}{N}=e^{-3\alpha}(1-e^{-3\alpha})$.
\end{fact}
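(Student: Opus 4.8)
The plan is to reduce the statement about $N'$, the number of bits that appear in at least one clause, to a sum of i.i.d.\ indicator random variables and then compute the first two moments directly. First I would write $N'=\sum_{i=1}^N \Ind{B_i\geq 1}$, where $B_i$ is the number of clauses containing bit $i$, whose distribution is given by Eq.~(\ref{eq:proba-b}). By linearity of expectation, $\mean{N'}=N\,\Pr[B_i\geq 1]=N\left(1-\left(1-\tfrac{3}{N}\right)^M\right)$, and since $M=\alpha N$ we have $\left(1-\tfrac{3}{N}\right)^{\alpha N}\to e^{-3\alpha}$ as $N\to\infty$, which immediately gives $\lim_{N\to\infty}\mean{N'}/N=1-e^{-3\alpha}$.

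For the variance, the subtlety is that the indicators $\Ind{B_i\geq 1}$ are \emph{not} independent: conditioning on bit $i$ appearing in some clause slightly decreases the chance that bit $j$ does, because the $M$ clauses are shared. So I would compute $\var{N'}=\sum_i \var{\Ind{B_i\geq 1}} + \sum_{i\neq j}\mathrm{Cov}(\Ind{B_i\geq 1},\Ind{B_j\geq 1})$. The diagonal terms give $N\,p(1-p)$ with $p=1-\left(1-\tfrac{3}{N}\right)^M$, contributing $e^{-3\alpha}(1-e^{-3\alpha})$ per bit in the limit. The key step is to bound the covariance: one computes $\Pr[B_i\geq 1,\ B_j\geq 1]$ via inclusion–exclusion on the events $\{B_i=0\}$ and $\{B_j=0\}$, using that the probability a given clause avoids both bits $i$ and $j$ is $\left(1-\tfrac{3}{N}\right)\!\big/\!\big(1+O(1/N)\big)$ — more precisely the probability of avoiding a fixed pair is $\binom{N-2}{3}/\binom{N}{3}$ while avoiding a fixed single bit is $\binom{N-1}{3}/\binom{N}{3}$. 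A short expansion shows each covariance term is $O(1/N)$, so the off-diagonal sum is $O(N^2\cdot 1/N)=O(N)$; one then checks the constant multiplying this $O(N)$ actually vanishes in the limit after dividing by $N$, leaving only the diagonal contribution. Hence $\lim_{N\to\infty}\var{N'}/N=e^{-3\alpha}(1-e^{-3\alpha})$.

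The main obstacle is the covariance estimate: one must be careful that although each pairwise covariance is only $O(1/N)$, there are $\Theta(N^2)$ of them, so a crude bound would suggest a contribution of order $N$ to $\var{N'}$ rather than $o(N)$. The resolution is that the leading $O(1/N)$ coefficient in the covariance, when multiplied by $N^2$ and divided by $N$, tends to a finite constant that must be shown to be zero; this follows because $\Pr[B_i=0,B_j=0]-\Pr[B_i=0]\Pr[B_j=0]$ is actually $O(1/N^2)$ rather than $O(1/N)$ once the exponents are matched correctly (the pair-avoidance and single-avoidance probabilities differ at order $1/N$ in the base but this is raised to the power $M=\alpha N$, and the cross terms cancel to higher order). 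An alternative, cleaner route is to use a Poissonization argument: replace the fixed number of clauses $M$ by a Poisson($M$) number, which makes the $B_i$ genuinely independent, compute everything exactly, and then transfer the result back by a standard de-Poissonization estimate controlling the $O(1/\sqrt{N})$ correction. Either way the concentration statement — that $N'/N$ is peaked at $1-e^{-3\alpha}$ — follows from Chebyshev's inequality once $\var{N'}/N$ is shown bounded.
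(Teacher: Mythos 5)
Your reduction $N'=\sum_{i}\Ind{B_i\geq 1}$ and the computation of the mean are fine. The gap is precisely in the covariance estimate, which is the heart of your variance argument. For the paper's model ($M$ i.i.d.\ uniform triples) one has $\Pr[B_i=0]=\left(1-\tfrac{3}{N}\right)^M$ and $\Pr[B_i=0,B_j=0]=\left(\tfrac{(N-3)(N-4)}{N(N-1)}\right)^M$, and the two bases differ at order $1/N^2$: $\tfrac{(N-3)(N-4)}{N(N-1)}-\left(1-\tfrac{3}{N}\right)^2=-\tfrac{3(N-3)}{N^2(N-1)}\approx -\tfrac{3}{N^2}$. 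Raising to the power $M=\alpha N$ turns this into a \emph{relative} deficit of size $3\alpha/N$, so each covariance is $\approx-\tfrac{3\alpha e^{-6\alpha}}{N}$, i.e.\ $\Theta(1/N)$ and \emph{not} $O(1/N^2)$; there is no cancellation of the kind you invoke. Summed over the $\Theta(N^2)$ ordered pairs this contributes $-3\alpha e^{-6\alpha}N(1+o(1))$ to $\var{N'}$, the same order as the diagonal term, so the step ``one then checks the constant actually vanishes'' is exactly the step that fails. The Poissonization escape route does not repair this: even with a Poisson number of clauses the $B_i$ are not independent (a single clause touches three bits, so $B_i$ and $B_j$ share the clauses containing both, giving covariance $\approx+\tfrac{6\alpha e^{-6\alpha}}{N}$ per pair), and de-Poissonization is not a harmless $O(1/\sqrt{N})$ correction here, because the $\Theta(\sqrt{N})$ fluctuation of $M$ feeds a $\Theta(N)$ term into $\var{N'}$ through the law of total variance.

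The clean way to settle the constant is the classical occupancy computation: use the second factorial moment $\mean{(N-N')(N-N'-1)}=N(N-1)\left(\tfrac{(N-3)(N-4)}{N(N-1)}\right)^M$ together with $\mean{N-N'}=N\left(1-\tfrac{3}{N}\right)^M$, expand the logarithms to order $1/N$, and combine. Carrying this out gives $\var{N'}/N\to e^{-3\alpha}-(1+3\alpha)e^{-6\alpha}$, i.e.\ the naive binomial value $e^{-3\alpha}(1-e^{-3\alpha})$ reduced by $3\alpha e^{-6\alpha}$ — the same phenomenon as in the textbook empty-boxes problem, where $m=cn$ balls in $n$ boxes give variance $\sim n\left(e^{-c}-(1+c)e^{-2c}\right)$. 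So your argument, as written, cannot deliver the constant stated in the Fact, because the pairwise correlations genuinely shift it; note, however, that the conclusion actually used in the paper — $\var{N'}=\Theta(N)$ and hence, by Chebyshev, concentration of $N'/N$ around $1-e^{-3\alpha}$ — follows from either constant, so the fix is to keep (and evaluate) the covariance term rather than argue it away.
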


For any set of bits $S\subseteq[N]$, let us define the induced subgraph $G_S$ as the graph on the set of vertices $S$ such that $(i,j)\in S\times S$ is an edge of $G_S$ if and only if it is also an edge of $G$. When there is no ambiguity, we will sometimes use $S$ to denote the subgraph itself. For a given EC3 instance on $N$ bits, we denote by $\mathcal{G}_u$ the set of subsets $S\subseteq [N]$ of size $u$ whose associated subgraphs are connected, or in short the set of connected graphs of size $u$. Let $G_u=|\mathcal{G}_u|$ be the number of connected graphs of size $u$. We will later use the fact that $G_u$ is linear in $N$ (the proof is given in Appendix~\ref{app:number-graphs}).
\begin{lemma}\label{lem:mean-size-connected}
 For any $u\in\mathbb{N}$, we have $\mean{G_u}=\Theta(N)$.
\end{lemma}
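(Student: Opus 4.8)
The plan is to show that the expected number of connected induced subgraphs of a fixed size $u$ is both $O(N)$ and $\Omega(N)$. The key structural fact is that the graph $G$ associated with a random EC3 instance is \emph{locally sparse}: its edges come from only $M=\alpha N$ clauses, each contributing at most $\binom{3}{2}=3$ edges, so $|E(G)|\leq 3M=3\alpha N$ and the vertex degrees are $O(1)$ with high probability (by Fact~\ref{fact:proba-b} and Markov). A connected graph on $u$ vertices must contain a spanning tree, hence at least $u-1$ edges. So I would first bound $\mean{G_u}$ by counting, for each candidate vertex set $S$ of size $u$, the probability that $G_S$ is connected, and bound that by the probability that $G_S$ contains at least $u-1$ edges.

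For the upper bound $\mean{G_u}=O(N)$: fix a spanning tree shape on $u$ labelled vertices (there are $u^{u-2}$ of them, a constant). The probability that a \emph{specific} set of $u-1$ edges is all present in $G$ is at most $(c/N^2)^{u-1}$ for some constant $c$ depending on $\alpha$, since each edge $(i,j)$ is present only if some clause contains both $i$ and $j$, an event of probability $O(1/N^2)$ (there are $M=\alpha N$ clauses, each hitting a fixed pair with probability $\Theta(1/N^2)$). Summing over the $\binom{N}{u}=O(N^u)$ choices of $S$, over the $O(1)$ tree shapes, and over the assignment of tree-vertices to elements of $S$, gives $\mean{G_u}\leq O(N^u)\cdot O(1)\cdot (c/N^2)^{u-1}=O(N^{u-2(u-1)})=O(N^{2-u})\cdot N = O(N)$ for $u\geq 1$; more carefully the dominant term is $N^u\cdot N^{-2(u-1)}=N^{2-u}$, which is $\leq N$ exactly when $u\geq 1$, and one must check the $u=1$ and $u=2$ cases separately (there $\mean{G_1}=\mean{N'}=\Theta(N)$ and $\mean{G_2}=\Theta(N)$ directly). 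I would phrase the edge-presence estimate carefully using a union bound over the at most $3M$ edges actually present, rather than over all $\binom{N}{2}$ pairs.

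For the lower bound $\mean{G_u}=\Omega(N)$: it suffices to exhibit $\Omega(N)$ vertex sets of size $u$ that are connected with probability bounded below by a positive constant. The cleanest route is to condition on the event that a fixed bit, say bit $1$, participates in exactly enough clauses to force a connected neighbourhood of size $u$ — e.g., the event that the $u-1$ clauses needed to build a small tree containing bit $1$ are all present, which by Eq.~(\ref{eq:lim-proba-b}) and independence of the $M$ clause-choices has probability $\Omega(1)$ as $N\to\infty$ (the relevant clauses land in the right place with probability $\Theta(1/N^{\text{const}})$ each, but there are $M=\alpha N$ clauses and $\binom{M}{u-1}=\Theta(N^{u-1})$ ways to choose which ones, giving $\Omega(1)$ overall), while simultaneously no \emph{extra} clauses spoil the set from being exactly an induced connected subgraph on those $u$ vertices. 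Running this construction anchored at each of the $N$ bits, and noting the resulting connected sets are distinct, yields $\mean{G_u}\geq c'N$.

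The main obstacle I anticipate is the bookkeeping in the lower bound: one must ensure that the $\Omega(N)$ connected subgraphs produced by anchoring at different vertices are genuinely distinct (or overlap-count with only $O(1)$ multiplicity), and that the events ``$G_S$ is connected'' one uses are not so rare that their sum over $S$ fails to be $\Omega(N)$ — this requires choosing the ``target configuration'' of clauses to be of \emph{minimal} size ($u-1$ clauses suffices to connect $u$ vertices) so that its probability is not crushed. The upper bound is essentially a first-moment/union-bound calculation and should go through routinely once the per-edge probability $O(1/N^2)$ and the spanning-tree-counting constant $u^{u-2}$ are in place; I would relegate the detailed constants to Appendix~\ref{app:number-graphs} as the paper indicates.
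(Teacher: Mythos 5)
Your overall strategy (a first-moment count over spanning trees for the upper bound, an anchored local-growth argument for the lower bound) is genuinely different from the paper's, which instead symmetrizes, $\mean{G_u}=\binom{N}{u}\Pr[\textrm{``}[u]\textrm{ is connected''}]$, and then proves two-sided bounds $\Theta\bigl(N/\binom{N}{u}\bigr)$ on that single probability --- the lower bound by an explicit clause-by-clause construction and the upper bound by induction on $u$, removing a non-cut vertex. Your route could work, but as written it has a concrete quantitative error that breaks the upper bound. The probability that a fixed pair $(i,j)$ is an edge of $G$ is \emph{not} $O(1/N^2)$: a single random clause contains both $i$ and $j$ with probability $\Theta(1/N^2)$, but there are $M=\alpha N$ clauses, so the probability that \emph{some} clause does is $\Theta(M/N^2)=\Theta(1/N)$. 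With your value the count $\binom{N}{u}\cdot O(1)\cdot(c/N^2)^{u-1}=O(N^{2-u})$ would for $u\geq 2$ contradict the very lower bound $\Omega(N)$ you claim next (and the lemma itself), and the unexplained extra factor ``$\cdot N$'' you insert to land on $O(N)$ has no justification. With the correct per-edge probability $\Theta(1/N)$ the first-moment bound does give $\binom{N}{u}\cdot O(1)\cdot O(N^{-(u-1)})=O(N)$ uniformly in $u$, but you then still owe an argument for why the probability that all $u-1$ tree edges are present factorizes up to constants (edges are correlated, since one clause creates up to three edges and can realize two edges of the tree); the clean fix is to sum over which clauses realize which tree edges, which only changes $u$-dependent constants.

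The lower bound is salvageable but the probability bookkeeping you give does not deliver $\Omega(1)$ as stated: if the ``right place'' of a clause specifies two or three particular bits, each placement costs $\Theta(1/N^2)$ or $\Theta(1/N^3)$, and $\binom{M}{u-1}\cdot\Theta(N^{-2(u-1)})=\Theta(N^{-(u-1)})$, not $\Omega(1)$; you only recover a constant after also summing over the $\Theta(N^{u-1})$ choices of the target vertices, which you never do. The simplest correct version of your idea: by the Poisson limit in Eq.~(\ref{eq:lim-proba-b}), $\Pr[B_1\geq\lceil(u-1)/2\rceil]$ tends to a positive constant, and on that event (with the $O(1)$ involved bits all distinct, which holds with probability $1-o(1)$) the connected component of bit $1$ has at least $u$ vertices, hence contains a connected induced $u$-set through bit $1$; anchoring at each of the $N$ bits and dividing by the multiplicity $u$ gives $\mean{G_u}\geq\frac{1}{u}\sum_i\Pr[\,|C(i)|\geq u\,]=\Omega(N)$. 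Note also that your worry about ``extra clauses spoiling'' the set is a red herring: membership in $\mathcal{G}_u$ only requires the induced subgraph to be connected, and connectivity is monotone under adding clauses, so extra clauses can only help. In short, the skeleton of a valid alternative proof is there, but the per-edge probability, the resulting exponent in the upper bound, and the $\Omega(1)$ claim in the lower bound all need to be redone before the argument stands.
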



\section{Perturbation theory for the adiabatic Hamiltonian}\label{sec:perturbation}
\subsection{Perturbation theory using Green's functions}\label{sec:green-function}
In the following sections, we will show that the Hamiltonian $H(s)$ exhibits an exponentially small gap close to $s=1$. To study the spectrum of $H(s)$ around $s=1$, let us consider the Hamiltonian $H(\lambda)=\frac{H(s)}{s}=H_P+\lambda V$, where $\lambda=\frac{1-s}{s}$ and $V=H_0$ acts as a time independent perturbation on $H_P$. We describe how the spectrum of this Hamiltonian can be written as a perturbation expansion in powers of $\lambda$. Let $\ket{\x}$ be a non-degenerate eigenstate of $H_P$ with energy $E_\x$. We define the self-energy as
\be\label{eq:self-energy}
\Sigma_\x(E)=\sum_{q=1}^\infty \lambda^q \Sigma_\x^{(q)}(E),
\ee
where
\be
\Sigma_\x^{(q)}(E)=\sum_{\y^1,\dots ,\y^{q-1}}\frac{V_{\x \y^1}V_{\y^1\y^2}\dots V_{\y^{q-1}\x}}{(E-E_{\y^1})(E-E_{\y^2})\dots (E-E_{\y^{q-1}})},
\ee
$V_{\y^i\y^j}=\la \y^i|V|\y^j\ra$, and the sum in the last expression is over all eigenstates of $H_P$ different from $\x$. The perturbed eigenvalue $E_\x(\lambda)$ is then given by the pole of the Green's function
\be
G_\x(E)=\frac{1}{E-E_\x-\Sigma_\x(E)}.
\ee
Therefore, a perturbation expansion
\be\label{Eig_expansion}
E_\x(\lambda)=E_\x+\sum_{q=1}^\infty \lambda^q E_\x^{(q)}
\ee
 may be obtained by solving the equation $E=E_\x+\Sigma_\x(E)$ recursively using the perturbation expansion~(\ref{eq:self-energy}) for the self-energy.
The energy up to first order is
\be
E_\x(\lambda)
=E_\x+\lambda\Sigma_\x^{(1)}(E_\x)+O(\lambda^2),
\ee
so that $E_\x^{(1)}=\Sigma_\x^{(1)}$, where, when not explicitly written, all $\Sigma_\x^{(q)}(E)$ (and later also their derivatives) are evaluated at $E=E_\x$.
The energy up to second order term is
\ber
E_\x(\lambda)&=&E_\x+\lambda\Sigma_\x^{(1)}(E_\x+\lambda\Sigma_\x^{(1)})+\lambda^2\Sigma_\x^{(2)}(E_\x)+O(\lambda^3)\nonumber \\ &=&E_\x+\lambda\Sigma_\x^{(1)}+\lambda^2\left(\Sigma_\x^{(1)\prime}\Sigma_\x^{(1)}+\Sigma_\x^{(2)}\right)+O(\lambda^3) ,
\eer
where we have used Taylor series expansion and kept terms up to second order in $\lambda$. The second order correction $E_\x^{(2)}$ is then given by the coefficient of $\lambda^2$.
Similarly, the energy up to third order is given by
\ber
E_\x(\lambda)&=&E_\x
+ \lambda\Sigma_\x^{(1)}\left(E_\x+\lambda\Sigma_\x^{(1)}+\lambda^2\left(\Sigma_\x^{(1)\prime}\Sigma_\x^{(1)}+\Sigma_\x^{(2)}\right)\right)\nonumber\\
&&
+\lambda^2\Sigma_\x^{(2)}(E_\x+\lambda\Sigma_\x^{(1)})
+ \lambda^3\Sigma_\x^{(3)}(E_\x)+O(\lambda^4) \nonumber \\
&=& E_\x + \lambda \Sigma_\x^{(1)} + \lambda^2 \left(\Sigma_\x^{(1)\prime}\Sigma_\x^{(1)}+\Sigma_\x^{(2)}\right)\nonumber\\
&&
+ \lambda^3\left( \frac{1}{6}((\Sigma_\x^{(1)})^3)^{\prime\prime}+(\Sigma_\x^{(1)}\Sigma_\x^{(2)})^\prime + \Sigma_\x^{(3)}\right)+O(\lambda^4).
\eer

We can now take $V=H_0=-\sum_i\sigma_x^{(i)}$ and give the first few orders of the expansion. First note that in this case $\Sigma_\x^{(q)}(E)=0$ for every odd $q$. This is because $\la \x|H_0|\y\ra\neq 0$ if and only if $\x$ and $\y$ differ by one bit. 
Since at least one $\Sigma_\x^{(q')}$ (or some derivative of it) of odd order $q'$ will appear in each term of the correction $E_\x^{(q)}$ for odd order $q$, all odd orders in the perturbation expansion vanish. In this case, the first three non-zero terms are given below.
\ber
E_\x^{(2)}&=&\Sigma_\x^{(2)} \\
E_\x^{(4)}&=&\frac{1}{2}((\Sigma_\x^{(2)})^2)^\prime + \Sigma_\x^{(4)} \label{eq:energy-order-4} \\
E_\x^{(6)}&=&\frac{1}{6}((\Sigma_\x^{(2)})^3)^{\prime\prime}+(\Sigma_\x^{(2)}\Sigma_\x^{(4)})^\prime + \Sigma_\x^{(6)} .
\eer
We would like to express each of these corrections as a sum over paths going from the assignment $\x$ and back. Consider
\be
E_\x^{(2)}=\Sigma_\x^{(2)}=\sum_{\y} \frac{\la \x|H_0|\y\ra\la \y|H_0|\x\ra}{E_\x-E_\y}\label{eq:energy-order-2}.
\ee
Since the only non-zero terms arise when $\y$ is a single bit flip away from $\x$, we can think of $E_\x^{(2)}$ as a sum over all paths going from the assignment $\x$ and back which consist in flipping (and flipping back) only one bit. Similarly, we can think of $E_\x^{(q)}$ as a sum over all paths on the hypercube which consist in flipping any $q/2$ bits and flipping them back in all possible sequences. Thus, we define $A(P)$ such that
\be\label{eq:corrections}
E_\x^{(q)}=\sum_{P:\sum_ip_i=q/2}A(P) ,
\ee
where $P=(p_i)_{i=1}^N\in\N^N$ is a vector whose $i$-th component specifies half the number of times bit $i$ is flipped (we take {\it half} the number since any bit that is flipped  must be flipped back.) Of course, specifying $P$ does not uniquely specify a path, so $A(P)$ involves a sum over all paths corresponding to $P$.

\subsection{Scaling of corrections at successive orders}
In this section, we will show that when evaluating eigenvalues of the Hamiltonian $H(\lambda)$ around $\lambda=0$ by perturbation theory as described in the previous subsection, corrections for successive orders are all of order $\Theta(N)$.  Since all corrections are of the same order, this means that for large $N$, the leading behavior is given by the first non-zero correction in the expansion. This also suggests that the range of $\lambda$ for which the leading order in the perturbation expansion gives an accurate approximation is $N$-independent (this statement will be discussed in Section~\ref{sec:applicability}).

Let us consider the corrections $E_\x^{(q)}$ in Eq.~(\ref{eq:corrections}). We denote as $S(P)\subseteq[N]$ the set of bits that are flipped at least once in the paths corresponding to $P$, i.e., $S(P)=\{i\in[N]:p_i>0\}$. By extension, $S(P)$ then also defines, via the matrix $(J_{ij})$, a graph where vertices correspond to elements of $S(P)$.
In order to show that all corrections $E_\x^{(q)}$ scale as $\Theta(N)$, we prove that we do not need to consider all vectors $\{P:\sum_ip_i=q/2\}$ but only those associated to connected graphs $S(P)$ (we say that the graph is disconnected if $S(P)$ can be expressed as a disjoint union $S_1\cup S_2$ such that $J_{ij}=0$ for all $i\in S_1$ and $j\in S_2$).
\begin{lemma}\label{lem:correction-disconnected}
Let $P_0\in\N^N$ be such that the graph associated to $S(P_0)$ is disconnected. Then, $A(P_0)=0$.
\end{lemma}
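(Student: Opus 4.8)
The plan is to prove the lemma as a manifestation of the linked-cluster (connected-diagram) cancellation, carried out through a locality argument. The starting observation is that $A(P_0)$ is a \emph{local} quantity. Indeed, every path that contributes to the term of profile $P_0$ in Eq.~(\ref{eq:corrections}), together with every intermediate state appearing in the self-energies and in the self-consistency/derivative corrections of Section~\ref{sec:green-function}, is an assignment differing from $\x$ only on the bits of $S(P_0)$; and every energy denominator is of the form $E_\x-E_\y$ with $\y$ of that kind. Since all bits outside $S(P_0)$ stay frozen at their $\x$-values, such a denominator depends only on the clauses meeting $S(P_0)$. Hence $A(P_0)$ equals the component of profile $P_0$ in the perturbation expansion of the $\ket{\x}$-eigenvalue (restricted to the bits of $S(P_0)$) of the \emph{local} Hamiltonian $H^{\mathrm{loc}}(\lam)=H^{\mathrm{loc}}_P+\lam\,H^{\mathrm{loc}}_0$ on $\{0,1\}^{S(P_0)}$, where $H^{\mathrm{loc}}_P$ is diagonal with entries $f$ evaluated on assignments that agree with $\x$ outside $S(P_0)$, and $H^{\mathrm{loc}}_0=-\sum_{i\in S(P_0)}\sigma_x^{(i)}$.

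Then I would invoke the hypothesis. Write $S(P_0)=S_1\sqcup S_2$ with $S_1,S_2$ nonempty and $J_{ij}=0$ for all $i\in S_1,\ j\in S_2$; by definition of $J$ this means no clause contains a bit of $S_1$ together with a bit of $S_2$, so the clauses meeting $S_1$ and the clauses meeting $S_2$ form disjoint families. Consequently $f$, restricted to varying the $S(P_0)$-bits, splits up to an additive constant as a function of the $S_1$-bits plus a function of the $S_2$-bits, whence $H^{\mathrm{loc}}_P=H^{(1)}_P\otimes\id+\id\otimes H^{(2)}_P+c\,\id$ (with $H^{(i)}_P$ acting on $\{0,1\}^{S_i}$), while trivially $H^{\mathrm{loc}}_0=H^{(1)}_0\otimes\id+\id\otimes H^{(2)}_0$; therefore $H^{\mathrm{loc}}(\lam)=H^{(1)}(\lam)\otimes\id+\id\otimes H^{(2)}(\lam)+c\,\id$. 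For such a tensor sum, the eigenvalue of $H^{\mathrm{loc}}(\lam)$ whose eigenvector is the continuation of $\ket{\x|_{S_1}}\otimes\ket{\x|_{S_2}}$ is exactly $E^{(1)}(\lam)+E^{(2)}(\lam)+c$, so its $\lam$-expansion is the term-by-term sum of those of $H^{(1)}(\lam)$ and $H^{(2)}(\lam)$. But every path occurring in the expansion of $H^{(i)}(\lam)$ flips only bits of $S_i$, so no term of the combined expansion has profile $P_0$, whose support $S_1\sqcup S_2$ lies neither in $S_1$ nor in $S_2$. With the locality reduction this yields $A(P_0)=0$, and more broadly explains why only \emph{connected} $S(P)$ survive in Eq.~(\ref{eq:corrections}).

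The hard part is making the two reductions watertight. For locality, I would check directly from the formulas of Section~\ref{sec:green-function} that the self-consistency corrections respect the splitting of $E_\x^{(q)}$ into profiles and are themselves local: this is bookkeeping, since the differentiations $\frac{d}{dE}$ and the products of self-energies are computed term by term and each resulting term only mentions states of the $S(P_0)$-subcube; the smallest nontrivial instance is Eq.~(\ref{eq:energy-order-4}) restricted to $P_0=e_i+e_j$ with $J_{ij}=0$, which collapses to $0$ once one uses $E_\x-E_{\x\oplus e_i\oplus e_j}=(E_\x-E_{\x\oplus e_i})+(E_\x-E_{\x\oplus e_j})$, and is a good sanity check. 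For the additivity of perturbation theory under tensor sums, one needs the relevant eigenvalues to be nondegenerate so that the expansions are defined; this is automatic here, because a degeneracy of $\ket{\x|_{S_1}}$ for $H^{(1)}_P$ or of $\ket{\x|_{S_2}}$ for $H^{(2)}_P$ would, upon extension by $\x$ on the remaining bits, produce a degeneracy of $\ket{\x}$ for $H_P$, contradicting the standing nondegeneracy assumption. A more pedestrian alternative that sidesteps the tensor-sum language is to group the paths realizing $P_0$ by how they interleave a closed walk on $S_1$ with a closed walk on $S_2$, use the additivity $E_\x-E_\y=g_1+g_2$ of the denominators, and verify the cancellation group by group; this is the linked-cluster argument in raw form, correct but considerably more tedious.
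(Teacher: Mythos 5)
Your overall strategy is the same as the paper's: freeze the bits outside $S(P_0)$ at their values in $\x$ (so that both the matrix elements and the energy denominators only involve clauses meeting $S(P_0)$), reduce to a Hamiltonian acting on the bits of $S(P_0)$ alone, and then use that, by disconnectedness, this reduced Hamiltonian is a sum of two commuting pieces on $S_1$ and $S_2$, whose eigenvalue is the sum of the two eigenvalues. This is exactly the paper's reduction to $H'_{S}(\vec{\lambda}_{S})=H'_{S_1}(\vec{\lambda}_{S_1})+H'_{S_2}(\vec{\lambda}_{S_2})$, and your sanity check at fourth order is consistent with it.

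However, your final step has a genuine gap. From $E^{\mathrm{loc}}(\lambda)=E^{(1)}(\lambda)+E^{(2)}(\lambda)+c$ you conclude that ``no term of the combined expansion has profile $P_0$'', hence $A(P_0)=0$. With a \emph{single} perturbation parameter, the only coefficients you may legitimately match are those of $\lambda^q$, and the coefficient of $\lambda^q$ is the sum of $A(P)$ over \emph{all} profiles with $\sum_i p_i=q/2$ supported in $S(P_0)$. The additivity identity therefore only yields that the sum of $A(P)$ over all order-$q$ profiles whose support meets both $S_1$ and $S_2$ vanishes; it does not isolate the individual coefficient $A(P_0)$, because the profile decomposition of a one-variable series is not determined by the series (distinct profiles contribute to the same power of $\lambda$, and exhibiting the series as a sum of two ``one-sided'' expansions does not exclude nonzero mixed coefficients cancelling among themselves). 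That weaker sum statement would in fact suffice for Theorem~\ref{thm:corrections-order-n}, but it is not the lemma you set out to prove. The paper closes precisely this gap by introducing independent parameters $\lambda_1,\dots,\lambda_N$, one per bit, as in Eq.~(\ref{E_tilde}): each profile $P$ is then tagged by the distinct monomial $\prod_i\lambda_i^{2p_i}$, coefficients can be matched profile by profile between Eq.~(\ref{eq:eigenvalue-reduced}) and the sum of the expansions (\ref{eq:reduced-expansion}), and the absence of mixed monomials forces $A(P)=0$ for every $P$ with $S(P)=S(P_0)$. Installing this multivariable bookkeeping into your argument makes it correct and essentially identical to the paper's proof; the ``pedestrian'' interleaving-and-cancellation alternative you mention could also work, but you have not carried it out, and it is exactly the nontrivial content the lemma requires.
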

\begin{proof}
Let $\vec{\lambda}=(\lambda_1,\dots ,\lambda_N)$ denote a multi-dimensional perturbation parameter, and let us consider the following generalized Hamiltonian
\be
H'(\vec{\lambda})=M\id-\frac{1}{2}\sum_{i=1}^N B_i\sigma_z^{(i)} +\frac{1}{4}\sum_{i,j=1}^N J_{ij}\sigma_z^{(i)}\sigma_z^{(j)} -\sum_{i=1}^N\lambda_i\sigma_x^{(i)}.
\ee
It can be seen that the perturbation expansion of the eigenvalues of this Hamiltonian can be written as
\be\label{E_tilde}
E'_\x(\vec{\lambda})=\sum_{q=0}^\infty\sum_{P:\sum_ip_i=q/2} \!\!\!A(P)\prod_i\lambda_i^{2p_i},
\ee
with the same coefficients $A(P)$ as for $E_\x(\lambda)$.

Now, let $S=S(P_0)$, and consider the Hamiltonian obtained from the generalized Hamiltonian by substituting $\lambda_j=0$ if $j\notin S$,
\be
H'(\vec{\lambda}_{S})=M\id-\frac{1}{2}\sum_{i=1}^N B_i\sigma_z^{(i)} +\frac{1}{4}\sum_{i,j=1}^N J_{ij}\sigma_z^{(i)}\sigma_z^{(j)} -\sum_{i\in S}\lambda_i\sigma_x^{(i)} ,
\ee
where $\vec{\lambda}_{S}$ is the vector obtained from $\vec{\lambda}$ by performing this substitution.
It is easy to see that the perturbation expansion of the eigenvalue corresponding to assignment $\x$ is given by
\be\label{eq:eigenvalue-reduced}
E'_{\x}(\vec{\lambda}_{S})=\sum_{q=0}^\infty\sum_{P:\left\{\atop{\sum_ip_i=q/2}{S(P)\subseteq S}\right.} \!\!\!A(P)\prod_i\lambda_i^{2p_i} ,
\ee
again with the same coefficients $A(P)$ as above. Now, observe that the operators $\sigma_z^{(i)}$ for $i\notin S$ commute with the Hamiltonian $H'(\vec{\lambda}_{S})$. Therefore,
the bits outside of $S$ fall out of the dynamics, and it suffices to study the Hamiltonian obtained from $H'(\vec{\lambda}_{S})$ by substituting $\sigma_z^{(i)}$ with $(-1)^{x_j}$, where $x_j$ is the value of the $j$-th bit in assignment $\x$, that is,
\be\label{eq:reduced-hamiltonian}
H'_{S}(\vec{\lambda}_{S})=-\frac{1}{2}\sum_{i\in S} B'_i\sigma_z^{(i)} +\frac{1}{4}\sum_{i,j\in S} J_{ij}\sigma_z^{(i)}\sigma_z^{(j)} -\sum_{i\in S}\lambda_i\sigma_x^{(i)}
\ee
where
\ber
B'_i&=&B_i-\frac{1}{2}\sum_{j\notin S}J_{ij}(-1)^{x_j},
\eer
and we have ignored an irrelevant term proportional to $\id$.
The eigenvalue $E'_{S,\x}(\vec{\lambda}_{S})$ of this Hamiltonian then coincides with that of Hamiltonian $H'(\vec{\lambda}_{S})$, given in Eq.~(\ref{eq:eigenvalue-reduced}), up to this irrelevant constant.

By assumption, the graph associated to $S$ is disconnected, so there exist disjoint non-empty sets $S_1$ and $S_2$ such that $S=S_1\cup S_2$ and $J_{ij}=0$ for all $i\in S_1$ and $j\in S_2$.
Therefore, we can write $H'_S(\vec{\lambda}_{S})$ as 
\be\label{eq:sum-hamiltonian}
H'_{S}(\vec{\lambda}_{S})=H'_{S_1}(\vec{\lambda}_{S_1})+H'_{S_2}(\vec{\lambda}_{S_2}).
\ee
The perturbation expansion of the eigenvalue $E'_{S_k,\x}(\vec{\lambda}_{S_k})$ of Hamiltonian $H'_{S_k}$ can be written similarly as above
\be\label{eq:reduced-expansion}
E'_{S_k,\x}(\vec{\lambda}_{S_k})=\sum_{q=0}^{\infty}\sum_{P:\left\{\atop{\sum_ip_i=q/2}{S(P)\subseteq S_k}\right.}\!\!\!A(P)\prod_i\lambda_i^{2p_i} .
\ee
Moreover, the Hamiltonians $H'_{S_1}(\vec{\lambda}_{S_1})$ and $H'_{S_2}(\vec{\lambda}_{S_2})$ commute since they only act non-trivially on different qubits, so by Eq.~(\ref{eq:sum-hamiltonian}) we have
\be
E'_{S,\x}(\vec{\lambda}_{S})= E'_{S_1,\x}(\vec{\lambda}_{S_1})+E'_{S_2,\x}(\vec{\lambda}_{S_2}).
\ee
Since there are no terms proportional to $\prod_i\lambda_i^{p_i}$ in the expansion~(\ref{eq:reduced-expansion}) for any $P=(p_i)$ such that $S(P)=S$, we must have that $A(P)=0$ for any such $P$.
\end{proof}


We now show that for connected graphs, the coefficients are finite.
\begin{lemma}\label{lem:correction-connected}
 Let $P_0\in\N^N$ such that the graph associated to $S(P_0)$ is connected and of size $u=O(1)$. Then, $\mean{A(P_0)}=\Theta(1)$.
\end{lemma}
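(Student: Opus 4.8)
The plan is to show that $\mean{A(P_0)} = \Theta(1)$ by combining the fact that the coefficient $A(P_0)$ only depends on the induced subgraph $G_S$ on $S = S(P_0)$ (together with the restriction $\x|_S$ of the assignment), with the observation that for random EC3 instances the local structure of the graph around any fixed set of $u = O(1)$ bits is $N$-independent in distribution. So first I would argue that $A(P_0)$ is a function purely of the "reduced" data $(J_{ij})_{i,j\in S}$, $(B'_i)_{i\in S}$ and $\x|_S$, which follows from the reduction carried out in the proof of Lemma~\ref{lem:correction-disconnected}: the bits outside $S$ decouple, and $A(P_0)$ is read off from the perturbation expansion~(\ref{eq:reduced-expansion}) of the eigenvalue of the finite-dimensional Hamiltonian $H'_S(\vec\lambda_S)$ in Eq.~(\ref{eq:reduced-hamiltonian}).

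The second step is the upper bound: I would show $|A(P_0)|$ is bounded by a constant depending only on $u$ and on the maximal value of the matrix entries $|J_{ij}|$ and $|B'_i|$ that actually occur. The point is that $A(P_0)$ is a finite sum over closed walks on the $u$-dimensional sub-hypercube, with each term a product of $O(1)$ matrix elements of $H_0$ (each equal to $\pm 1$) divided by energy denominators $E_\x - E_\y$. The energy denominators are differences of eigenvalues of $H_P$ restricted to the $2^u$ relevant computational states; these are integers (or half-integers) bounded in absolute value by a constant times $u\cdot\max|J_{ij}|$, and — crucially — they are nonzero because $\x$ is taken non-degenerate, so the denominators are bounded away from zero. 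Since $\max_{i,j}|J_{ij}| = \max_i B_i$ is bounded with high probability (Fact~\ref{fact:proba-b} and Markov), conditioning on this event gives $|A(P_0)| = O(1)$; the contribution of the rare event that some degree is large is controlled because $|A(P_0)|$ grows at most polynomially (indeed, at worst like a fixed power determined by $q = O(1)$) in $\max_i B_i$, while $\Pr[\max_i B_i > t]$ decays faster than any polynomial, so the tail contributes $o(1)$ to the expectation.

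The third step is the lower bound, i.e. ruling out $\mean{A(P_0)} = 0$ or $A(P_0) \to 0$. Here I would exhibit a configuration of the local subgraph $G_S$ and assignment $\x|_S$ that occurs with probability $\Theta(1)$ (independent of $N$, by the convergence of the local distribution — concretely, the event that $S$ induces a prescribed connected graph on $u$ vertices with no edges to the rest is $\Theta(1)$ by the Poisson-type limit analogous to Eq.~(\ref{eq:lim-proba-b}) and Lemma~\ref{lem:mean-size-connected}) and for which $A(P_0)$ is a fixed nonzero rational number; one then notes all the $\Theta(1)$-probability contributions have a definite sign or do not cancel against each other. Alternatively, and more robustly, one shows $A(P_0)$ is a nonzero rational function of the $J_{ij}$'s that cannot vanish identically on the (finite) set of admissible integer matrices compatible with $S(P_0) = S$ and $S$ connected, because its leading term corresponds to the single "canonical" ordering of bit-flips with the smallest energy denominators — that term is manifestly nonzero and is not cancelled. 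I would expect this third step, establishing the \emph{lower} bound (that the expectation does not vanish due to cancellations across paths or across instances), to be the main obstacle, since the upper bound is essentially bookkeeping whereas non-vanishing requires either an explicit favorable configuration or a genuine structural argument about the combinatorics of closed walks on the hypercube.
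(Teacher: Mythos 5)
Your steps 1 and 2 are essentially the paper's own proof, which is in fact terser: it reduces $A(P_0)$ to the perturbation expansion of the $u$-qubit Hamiltonian $H'_{S}(\vec{\lambda}_{S})$ of Eq.~(\ref{eq:reduced-hamiltonian}) (the same reduction you invoke from the proof of Lemma~\ref{lem:correction-disconnected}), bounds the effective fields by $\frac{1}{2}\leq B'_i\leq 3B_i$ using connectivity of $S$ and $\sum_j J_{ij}=2B_i$, notes $\mean{B'_i}=\Theta(1)$ via Fact~\ref{fact:proba-b}, and concludes simply because the expansion of a Hamiltonian acting on $u=O(1)$ bits is $N$-independent. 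Two remarks. First, in your upper bound you should condition on the degrees of the $O(1)$ bits in and adjacent to $S$, not on $\max_{i\in[N]}B_i$ being bounded: the global maximum degree is \emph{not} $O(1)$ with high probability (it grows slowly with $N$), whereas the finitely many local $B_i$ have Poisson-type tails, which is exactly what makes your tail estimate go through; this is the role played in the paper by the bound $B'_i\leq 3B_i$ together with Fact~\ref{fact:proba-b}. Second, your step 3 — the lower bound in $\Theta(1)$, i.e.\ ruling out cancellation of $\mean{A(P_0)}$ across local configurations or across the closed walks contributing to $A(P_0)$ — is not in the paper at all: its proof passes directly from ``the expansion of $H'_S$ is $N$-independent'' to ``$\Theta(1)$'', so the non-vanishing is asserted implicitly rather than proved. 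You are right that this is the only genuinely delicate point, and your sketch of it (a fixed local configuration occurring with probability $\Theta(1)$ on which $A(P_0)$ is a nonzero rational, plus a no-cancellation claim) is itself still an assertion rather than a proof; but this is a gap you share with the paper rather than one you introduce, and everything you propose up to that point matches the paper's argument while being more careful about the tails.
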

\begin{proof}
 Let $S=S(P_0)$. Then, for any $P$ such that $S(P)\subseteq S$, the amplitude $A(P)$ in the perturbation expansion of the eigenvalue of $H$ is the same as for the Hamiltonian $H'_S(\vec{\lambda}_{S})$ in Eq.~(\ref{eq:reduced-hamiltonian}). From $\sum_{j} J_{ij}=2B_i$ and the fact that for any $i\in S$, there exists $j\in S$ such that $J_{ij}\geq 1$ (which follows from the fact that $S$ is connected), we also see that $\frac{1}{2}\leq B'_i\leq 3B_i$. From Fact~\ref{fact:proba-b}, this implies that $\mean{B'_i}=\Theta(1)$. Since the Hamiltonian $H'_S$ acts on a finite number of bits $u$, the perturbation expansion of its eigenvalues is $N$ independent, which proves the lemma.
\end{proof}

We may now prove the following theorem.
\begin{thm}\label{thm:corrections-order-n}
For any $q=O(1)$, the $q$-th order correction $E_\x^{(q)}$ of an eigenvalue of Hamiltonian $H$ scales as $\mean{E_\x^{(q)}}=\Theta(N)$.
\end{thm}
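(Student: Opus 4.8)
The plan is to combine the two preceding lemmas with the linearity result on the number of connected graphs (Lemma~\ref{lem:mean-size-connected}). By Eq.~(\ref{eq:corrections}), the correction decomposes as $E_\x^{(q)}=\sum_{P:\sum_ip_i=q/2}A(P)$. The first move is to prune this sum: by Lemma~\ref{lem:correction-disconnected}, $A(P)=0$ whenever the graph associated to $S(P)$ is disconnected, so only vectors $P$ with connected support graph contribute. Since $\sum_i p_i=q/2$ with $q=O(1)$, every such $P$ has support of size $u=|S(P)|\leq q/2=O(1)$, and for each fixed connected vertex set $S$ of size $u$ there are only finitely many $P$ with $S(P)=S$ (the number of ways to write $q/2$ as a sum of $u$ positive integers, which is an $N$-independent constant). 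Hence we may reorganize the sum as a finite sum over $u\le q/2$, a sum over connected graphs $S\in\mathcal{G}_u$, and a finite sum over the $P$ supported exactly on $S$.

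Next I would control the size of each surviving term. By Lemma~\ref{lem:correction-connected}, for any $P_0$ whose support graph is connected of size $u=O(1)$ we have $\mean{A(P_0)}=\Theta(1)$, with a constant depending only on $u$ and $q$ (not on $N$), since the relevant reduced Hamiltonian $H'_S(\vec\lambda_S)$ acts on $u=O(1)$ qubits with couplings bounded in terms of the $B_i$'s, which have $N$-independent means. Taking expectations and using linearity of expectation across the (finitely many, per graph) contributing $P$'s,
\be
\mean{E_\x^{(q)}}=\sum_{u\le q/2}\;\sum_{S\in\mathcal{G}_u}\;\sum_{P:S(P)=S}\mean{A(P)}
=\sum_{u\le q/2}\Theta(1)\cdot\mean{G_u},
\ee
and by Lemma~\ref{lem:mean-size-connected}, $\mean{G_u}=\Theta(N)$ for each fixed $u$. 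Since there are only $O(1)$ values of $u$, the whole sum is $\Theta(N)$, which is the claim. One should also check that the $\Theta(1)$ lower bound does not get cancelled: the upper bound $\mean{E_\x^{(q)}}=O(N)$ is immediate from $|\mathcal{G}_u|=O(N)$ and $|A(P)|=O(1)$; for the matching lower bound it suffices to exhibit, for at least one value of $u$, a constant fraction of connected subgraphs $S$ on which the contribution $\sum_{P:S(P)=S}\mean{A(P)}$ is bounded away from zero with a definite sign — e.g. the $u=1$ term $E_\x^{(2)}$ restricted to isolated flips, where each term in Eq.~(\ref{eq:energy-order-2}) has a fixed sign determined by $\mathrm{sgn}(E_\x-E_\y)$.

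The main obstacle is precisely this sign/non-cancellation issue for the lower bound: a priori the $\Theta(1)$ contributions from different connected graphs $S$ could have varying signs and partially cancel, so ``$\Theta(N)$'' for the expectation is not automatic from ``$\Theta(1)$ per graph, $\Theta(N)$ graphs.'' I expect this is handled by isolating the leading term (the first non-vanishing order, $q=2$, or more generally the diagonal-like contributions where the path stays within a single clause neighborhood) and showing its expectation is a strictly negative quantity of order $N$ — second-order perturbation theory lowers the ground-state energy, so $\mean{E_\x^{(2)}}<0$ with $|\mean{E_\x^{(2)}}|=\Theta(N)$ — and then absorbing the remaining orders as $O(N)$ corrections. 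The other, more routine, point to verify is that the implicit constants in Lemma~\ref{lem:correction-connected} are uniform over the (boundedly many) $P$ supported on a given $S$ and over the graph ``shapes'' of size $u$, which holds because there are only finitely many isomorphism types of connected graphs on $u=O(1)$ vertices.
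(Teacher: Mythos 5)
Your proposal follows essentially the same route as the paper's proof: prune the sum to connected supports via Lemma~\ref{lem:correction-disconnected}, bound each surviving term by $\Theta(1)$ on average via Lemma~\ref{lem:correction-connected}, and count $\Theta(N)$ connected supports via Lemma~\ref{lem:mean-size-connected}. The sign/non-cancellation issue you raise for the lower bound is not addressed in the paper at all --- its proof simply multiplies the two estimates --- so your first paragraph already reproduces the paper's entire argument.
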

\begin{proof}
From Lemma~\ref{lem:correction-disconnected}, this $q$-th order correction may be written as
\be
E_\x^{(q)}= \sum_{P:
\left\{\atop{\sum_ip_i=q/2}{S(P)\ {\rm connected}} \right.} A(P).
\ee
From Lemma~\ref{lem:mean-size-connected}, the number of terms in this sum is $\Theta(N)$ on average, and from Lemma~\ref{lem:correction-connected}, each of these terms is $\Theta(1)$ on average, which implies the theorem.
\end{proof}


\section{Avoided crossing}\label{sec:avoided-crossing}
\subsection{General idea}
In this section, we will show that for random instances of EC3, perturbation theory predicts that the spectrum of the Hamiltonian $H(s)$ will exhibit an avoided crossing, and therefore a small eigenvalue gap, close to $s=1$. The general strategy will be the following. We first consider an instance of EC3 with at least two satisfying assignments which are isolated i.e., the Hamming distance between the solutions (the number of bits in which the two solutions differ) is of order $\Theta(N)$.
Then, we modify the instance by adding a clause which is satisfied by one of the solutions, but not by the other, which will now correspond to a local minimum of the new cost function. We show that this can create an avoided crossing between the levels corresponding to the solution and the local minimum, and therefore a small gap in the spectrum of $H(s)$.

\subsection{Analysis by perturbation theory}
As detailed in Section~\ref{sec:green-function}, perturbation theory allows to evaluate the energy $E_\x(\lambda)$ of an eigenstate corresponding to an assignment $\x$ as
$$
E_\x(\lambda)=E_\x+\sum_{q=1}^\infty \lambda^{(q)}\ E_\x^{(q)}.
$$

For two solutions $\x^{1},\x^{2}$, let $E_{12}(\lambda)=E_{1}(\lambda)-E_{2}(\lambda)$ be the splitting between the two energies, and $E_{12}^{(q)}$ be the $q$-th order correction to this splitting. We know from Section~\ref{sec:perturbation} that the first non-zero correction to $E_\x(\lambda)$ yields the leading behavior even as $N$ increases since all corrections scale as $\Theta(N)$. Moreover, recall from Section~\ref{sec:green-function} that all odd order corrections are zero so that the first non-zero correction to $E_\x(\lambda)$ appears at second order. From Eq.~(\ref{eq:energy-order-2}), we have
\ber
E_\x^{(2)}&=&\sum_{\y} \frac{\la \x|H_0|\y\ra\la \y|H_0|\x\ra}{E_\x-E_\y}\\
&=&-\sum_{i=1}^N \frac{1}{B_i}
\eer
where we used the fact that $E_\x=0$ since $\x$ is a solution, and that $E_\y=B_i$ if $\y$ differs from $\x$ only in the $i$-th bit, since assignment $\y$ will violate all the clauses where this bit appears. Since this correction does not depend on the particular solution $\x$ we start from, we have $E_{12}^{(2)}=0$, so the correction to the splitting $E_{12}(\lambda)$ between two solutions can only appear at order 4. Eq.~(\ref{eq:energy-order-4}) yields
\be\label{eq:energy-order-4-ec3}
E_\x^{(4)} =
\sum_{i=1}^N \frac{1}{B_i^3}
+\frac{1}{2}
\sum_{\atop{i,j=1}{i\neq j}}^N \left(\frac{1}{B_i} + \frac{1}{B_j} \right)^2 \left(\frac{1}{B_i+B_j}-\frac{1}{E^\x_{ij}}\right)
\ee
where $E^\x_{ij}=E_\y$ for the assignment $\y$ obtained from the solution $\x$ by flipping bits $i$ and $j$~\footnote{We have $E^\x_{ij}\neq 0$ unless there is another solution at Hamming distance 2, which would typically not happen for an instance that has been properly cleaned as described in Appendix~\ref{app:numerical}.}.  When bits $i$ and $j$ never appear together in a clause, i.e., $J_{ij}=0$, we have $E^\x_{ij}=B_i+B_j$ and the corresponding term in Eq.~(\ref{eq:energy-order-4-ec3}) is zero. Therefore, we only need to sum over $i,j$ such that $J_{ij}\neq 0$, that is,
\be
E_\x^{(4)} =
\sum_{i=1}^N \frac{1}{B_i^3}
+\frac{1}{2}
\sum_{\atop{i,j=1}{J_{ij}\neq 0}}^N \left(\frac{1}{B_i} + \frac{1}{B_j} \right)^2 \left(\frac{1}{B_i+B_j}-\frac{1}{E^\x_{ij}}\right),
\ee
so that this expression only involves $\Theta(N)$ terms, as was shown in Section~\ref{sec:perturbation}. The first non-zero correction for the splitting $E_{12}(\lambda)$ is then given by
\be
E_{12}^{(4)} =
\frac{1}{2}
\sum_{\atop{i,j=1}{J_{ij}\neq 0}}^N \left(\frac{1}{B_i} + \frac{1}{B_j} \right)^2 \left(\frac{1}{E^{\x^{2}}_{ij}}-\frac{1}{E^{\x^{1}}_{ij}}\right),
\ee
so that
$
E_{12}(\lambda)=\lambda^4\ E_{12}^{(4)}+O(\lambda^6).
$
For random instances, $B_i$, $B_j$ and $E^{\x}_{ij}$ will become random numbers. Since $E_{12}^{(4)}$ is given by a sum of $\Theta(N)$ random terms with zero mean, we can expect the variance of $E_{12}^{(4)}$ to be of order $\Theta(N)$,
\be\label{linear}
\mean{(E_{12}^{(4)})^2}\approx C^{(4)} N,
\ee
and similarly for the $p$-th percentiles,
\be
P_{p}\left( (E_{12}^{(4)})^2\right)\approx C_p^{(4)} N,
\ee
so that
$
\Pr[(E_{12}^{(4)})^2\geq C_p^{(4)} N]\approx 1-\frac{p}{100}.
$
In the next subsection, we will check by numerical simulations that this is a good approximation, and also estimate the constant $C^{(4)}$, but from now on assume that this is correct.

Therefore, for sufficiently large $N$, we have that the energy difference $|E_1(\lambda)-E_2(\lambda)|$ becomes larger than 4 with probability $1-\frac{p}{100}$ for 
\be\label{eq:lambda-crossing}
\lambda>\lambda_c=\sqrt{2}\ (C_p^{(4)}N)^{-1/8}.
\ee
Suppose that $E_1(\lambda_*)-E_2(\lambda_*)>4$ for a given $\lambda_*>\lambda_c$ (the situation is depicted in Fig.~\ref{fig:level-crossing}). If we modify the problem by introducing one additional clause, we would still have $E'_1(\lambda_*)>E'_2(\lambda_*)$ since, by definition of $H_P$, one clause can only increase the energy by $4$. However, if this clause is such that it is satisfied by $\x^1$ but not by $\x^2$, we have $E'_1(0)=0$ and $E'_2(0)>0$, meaning that we now have a level crossing between $\lambda=\lambda_*$ and $\lambda=0$.

 If the Hamming distance $n$ between the two solutions scales as the total number of bits $N$, which is typically the case for instances close to the satisfiability threshold~\cite{biroli,zdeborova}, the crossing will only be avoided at the $n$-th order of perturbation theory, so that the minimum gap for the new problem will scale as $\lambda_c^n$, which is exponentially small in $N$. A more detailed study of the scaling of this gap will be provided in Section~\ref{sec:gap-scaling}.

\begin{figure}
\begin{center}
 \includegraphics[width=225pt]{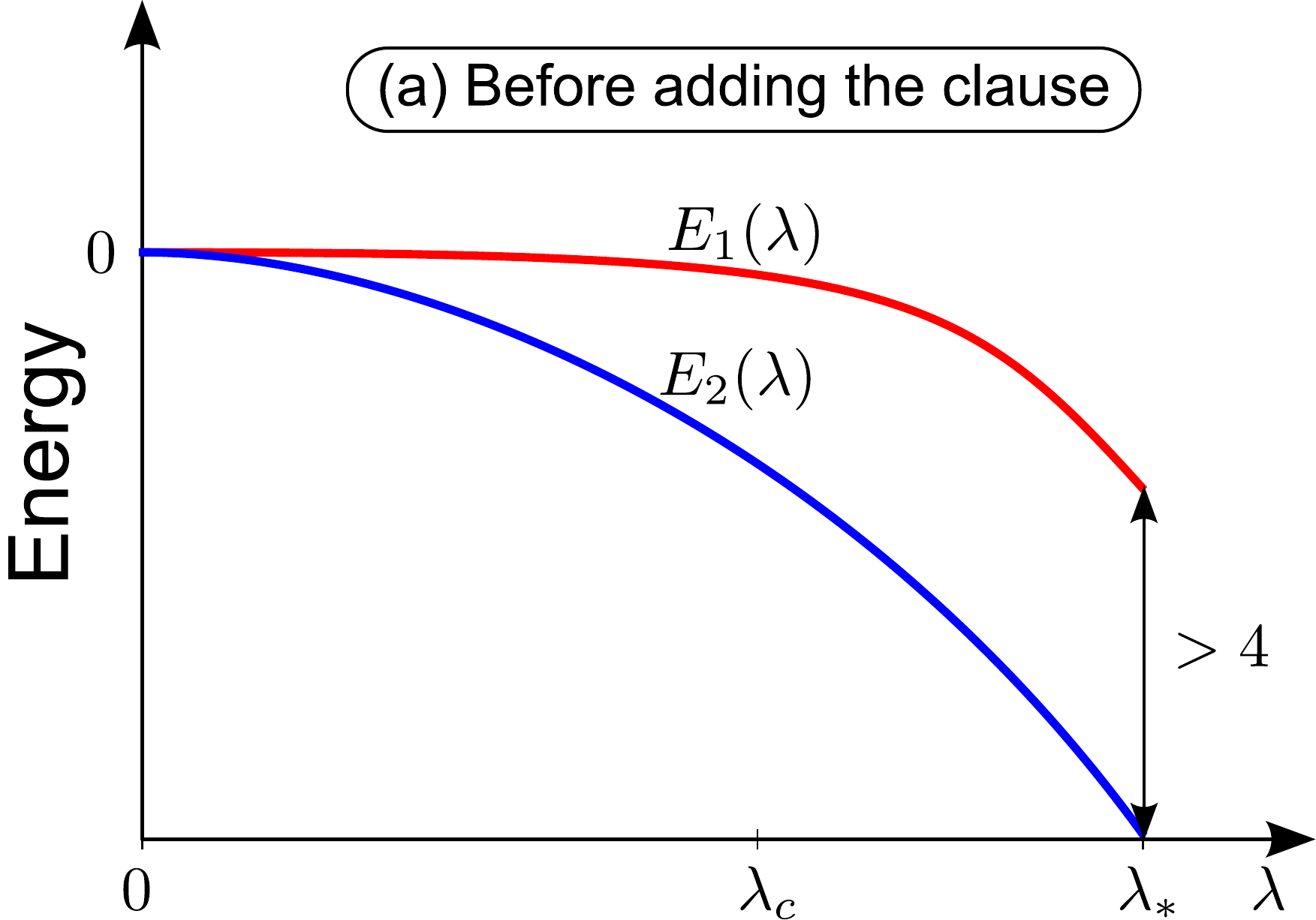}\hspace*{0.5cm}
\includegraphics[width=225pt]{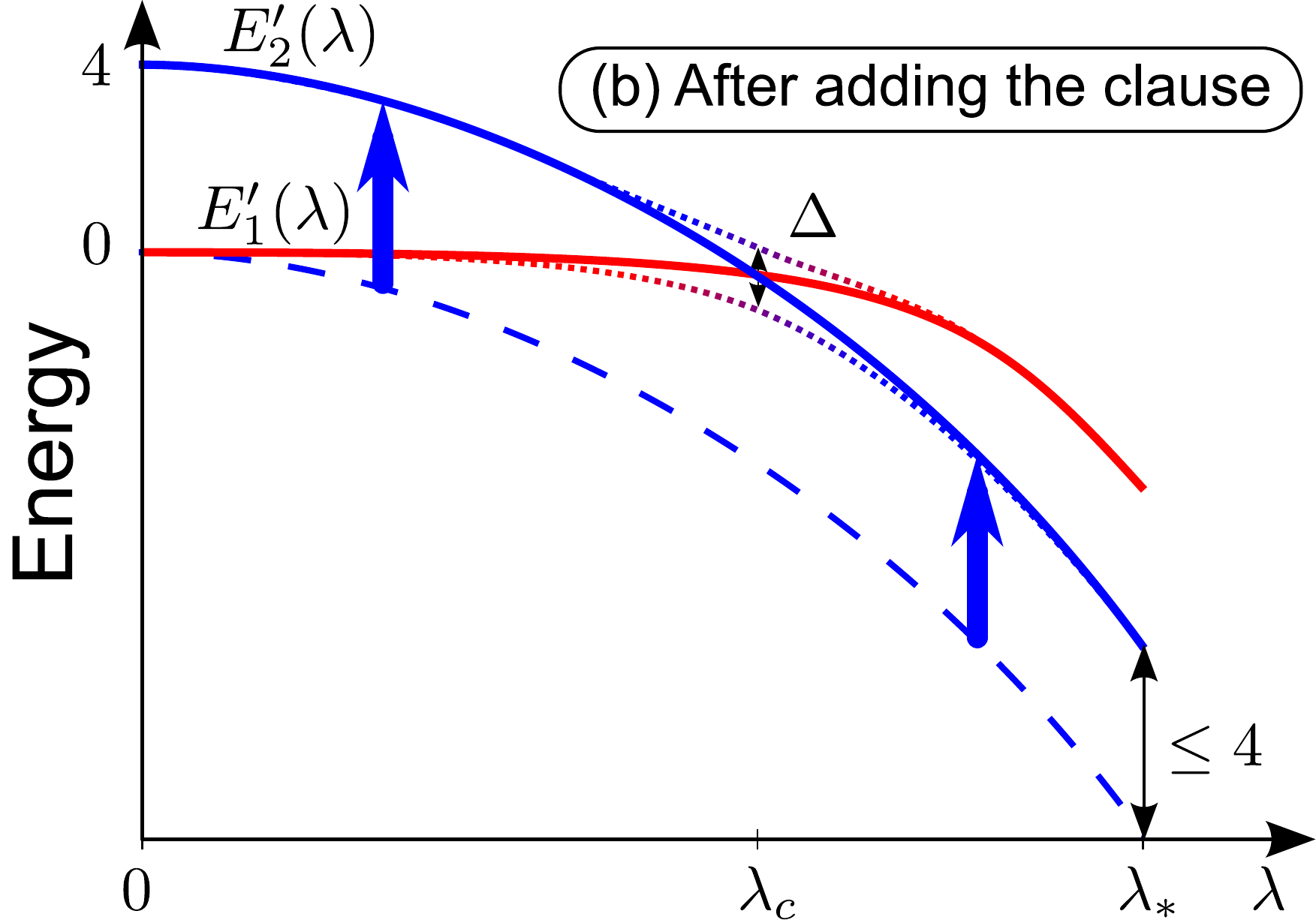}
\end{center}
\caption{Schematical representation of a level crossing. (a) Before adding the clause, we have $E_1(\lambda_*)-E_2(\lambda_*)>4$. (b) By adding a clause satisfied by solution 1 but not solution 2, we create a level crossing since $E_1'(0)<E_2'(0)$ but $E_1'(\lambda_*)>E_2'(\lambda_*)$.}
 \label{fig:level-crossing}
\end{figure}

\subsection{Numerical simulations}
To demonstrate the fact that $(E_{12}^{(4)})^2$ scales as $\Theta(N)$ and obtain an estimation of the slope, we performed the following numerical simulations. Since we are interested in hard instances, accepting very few isolated solutions, we fixed the clauses-to-variables ratio to $\alpha=0.62$, which is close to the satisfiability threshold $\alpha_s$~\cite{raymond07}. For each number of bits $N$  from $15$ to $200$ by steps of 5, we generated 5000 random instances with $M=\lfloor\alpha N\rfloor$ clauses, and then computed the energy splitting $E_{12}^{(4)}$ between the ground state of $H(\lambda)$ and the level that would correspond to the ground state of the Hamiltonian obtained by adding a final clause (the details of this procedure are given in Appendix~\ref{app:numerical}).

\begin{figure}
 \centering
\input{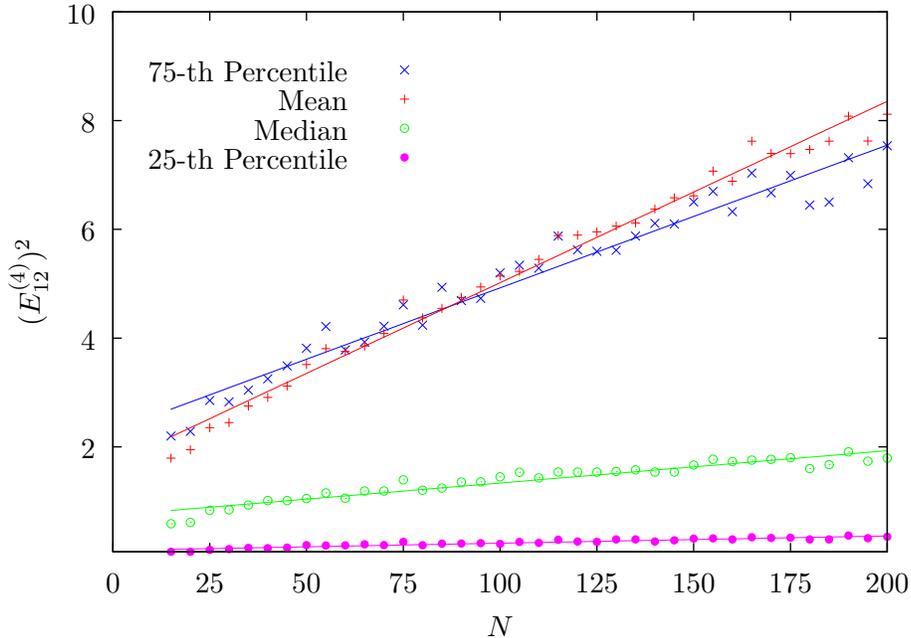}
 \caption{Statistics of the fourth order correction of the splitting $(E_{12}^{(4)})^2$. Each data point is obtained from 5000 EC3 instances with $\alpha\approx 0.62$.}
 \label{fig:e-value-diff-order4}
\end{figure}

In Fig.~\ref{fig:e-value-diff-order4}, we plotted the mean and some percentiles of $(E_{12}^{(4)})^2$ obtained by our simulation as a function of $N$. The mean closely agree with the linear regression in Eq.~(\ref{linear}),
with $C^{(4)}\approx 0.033$.

\begin{figure}
 \centering
\input{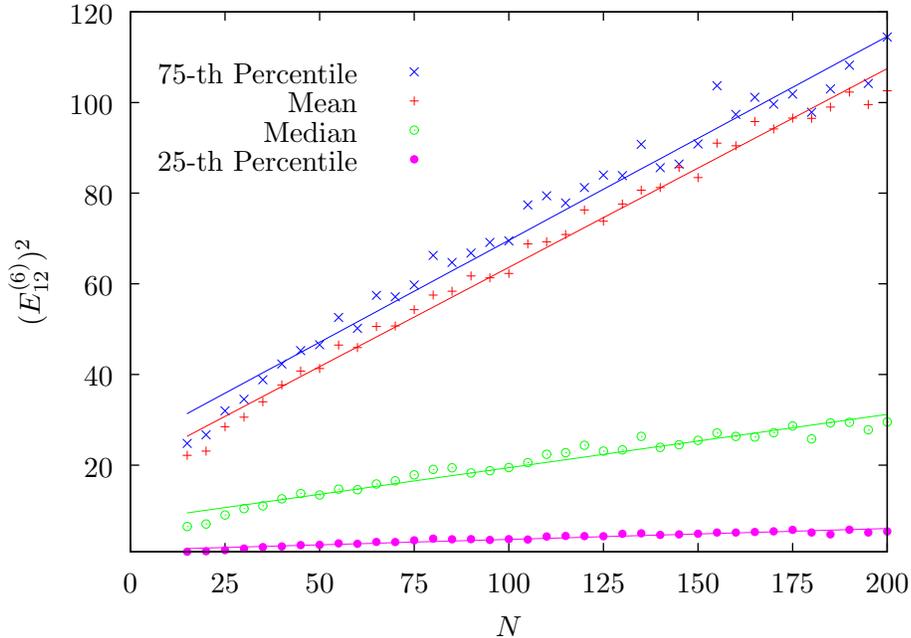}
 \caption{Statistics of the sixth order correction of the splitting $(E_{12}^{(6)})^2$. Each data point is obtained from 5000 EC3 instances with $\alpha\approx 0.62$.}
 \label{fig:e-value-diff-order6}
\end{figure}

Recall that we have shown in Section~\ref{sec:perturbation} that corrections $E_\x^{(q)}$ up to any order involves $\Theta(N)$ terms. Therefore, just as for order 4, higher order corrections to the splitting squared $(E_{12}^{(q)})^2$ are also expected to scale linearly in $N$, so that the $4$-th order gives the leading behavior of the splitting. To numerically confirm this, we also computed the $6$-th order correction to the splitting for each instance. In Fig~\ref{fig:e-value-diff-order6}, we plotted the mean and some percentiles of $(E_{12}^{(6)})^2$. As expected, they also agree closely with linear regressions, and in particular, we obtain for the mean
$
\mean{(E_{12}^{(6)})^2}\approx C^{(6)} N
$
with $C^{(6)}\approx 0.44$. This also allows us to give a very rough first approximation for the range of $\lambda$ where the perturbation expansion of the splitting
$$
E_{12}(\lambda)=E_{12}^{(4)}\lambda^4+E_{12}^{(6)}\lambda^6+O(\lambda^8),
$$
gives an accurate estimation.
Indeed, for the second term to be less than the first term, we need
$
\lambda<\sqrt{|E_{12}^{(4)}|/|E_{12}^{(6)}|}\approx \lambda_r,
$
and using the linear regression for the means yields $\lambda_r\approx(C^{(4)}/C^{(6)})^{1/4}\approx 0.52$.

\section{Scaling of the gap}\label{sec:gap-scaling}
In this section, we will study the scaling of the gap as the size of the problem increases, and confirm that the gap decreases exponentially. Recall how that gap is created (see Fig.~\ref{fig:level-crossing}), and consider the modified Hamiltonian with the additional clause, which exhibits an avoided crossing for some $\lambda=\lambda_c$. Let us study what happens when we evolve adiabatically from a large $\lambda$ to $\lambda=0$. For $\lambda>\lambda_c$, the ground state corresponds to the energy level $E'_2(\lambda)$, so the system is in the corresponding eigenstate $\ket{\x^2,\lambda}$, while the energy level $E'_1(\lambda)$ corresponds to an excited state $\ket{\x^1,\lambda}$. However, when $\lambda<\lambda_c$, the ground state now corresponds to the energy level $E'_1(\lambda)$, so this means that the system has to tunnel from $\ket{\x^2,\lambda}$ to $\ket{\x^1,\lambda}$. For small enough $\lambda$, these states will be localized, so that the tunneling amplitude will be small if $\x^1$ and $\x^2$ are far apart. More precisely, the minimal gap $\Delta$, that is, the width of the avoided crossing, may be evaluated in the regime of small $\lambda$ by computing the tunneling amplitude
$$
A_{12}(\lambda)=\braket{\x^{1}}{\x^2,\lambda}
$$
between $\ket{\x^2,\lambda}$ and $\ket{\x^1}$, at $\lambda=\lambda_c$. We will show that this amplitude, and therefore the gap $\Delta$ itself, becomes exponentially small if the avoided crossing happens for small enough $\lambda_c$. This implies that unless the evolution is exponentially long, the system will not have the time to tunnel from $\ket{\x^2,\lambda}$ to $\ket{\x^1,\lambda}$, and therefore end up in the state $\ket{\x^2}$, which does not correspond to a solution but only to a local minimum.

\subsection{The Disagree problem}

We will show that computing the tunneling amplitude between two solutions of EC3 reduces to computing the same quantity for an instance of the Disagree problem. An instance of Disagree over $n$ bits consists in a set of clauses of the form $(x_{i_C}\neq x_{j_C})$, where $i_C,j_C\in[n]$. As for EC3, a solution of Disagree is a bit string $\x\in\{0,1\}^n$ satisfying all clauses, and therefore corresponds to the minimum of a cost function
$$
f(\x)=\sum_C (x_{i_C}+x_{j_C}-1)^2.
$$
Therefore, we can design an adiabatic quantum algorithm for Disagree using as final Hamiltonian
$$
H_P=\frac{m}{2}\id+\frac{1}{4}\sum_{i,j=1}^n \tilde{J}_{ij} \sigma_z^{(i)}\sigma_z^{(j)},
$$
where, similarly to EC3, $\tilde{J}_{ij}$ is the number of clauses where bits $i,j$ appear together, and $m$ is the total number of clauses.

Since each clause involves exactly two bits, an instance may be associated to a graph where each vertex $i\in[n]$ represents a bit, and each edge $(i,j)\in[n]^2$ represents a clause $(x_i\neq x_j)$. Note that unless the graph is bipartite, it may include odd cycles and therefore the corresponding problem would have no solution. Here, we will focus on instances associated to connected bipartite graphs, which admit exactly two solutions, where all bits are set to $0$ in one partition and to $1$ in the other. Note that by negating all the bits in one partition, we may map such an instance of Disagree to an instance of Agree, where all clauses are of the type $(x_{i_C}= x_{j_C})$, and where the solutions are the all-$0$ and all-$1$ strings. The Agree problem has been previously studied in the context of adiabatic quantum computing, and it has been shown that when the graph is a cycle, the gap is only polynomially small, but it can be made exponentially small by modifying the weights on the different clauses~\cite{Reich}. Here we show that the Disagree problem is also relevant to the study of EC3.

\subsection{Reduction to the Disagree problem}

\begin{claim}
 Up to leading order in perturbation theory, the tunneling amplitude between two solutions $\x^{1},\x^{2}$ of an instance of EC3 over $N$ bits, is the same as for an instance of Disagree over $n$ bits where the associated graph is bipartite and $n=d_H(\x^{1},\x^{2})$ is the Hamming distance between the solutions.
\end{claim}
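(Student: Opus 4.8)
The plan is to compute the tunneling amplitude $A_{12}(\lambda)=\braket{\x^1}{\x^2,\lambda}$ perturbatively in $\lambda$ and show that, to leading order, only a restricted set of bits and a restricted set of matrix elements of $H_P$ survive, and that these reproduce exactly the data of a Disagree instance. First I would recall that $\ket{\x^2,\lambda}$ is the eigenstate of $H(\lambda)=H_P+\lambda V$, $V=H_0=-\sum_i\sigma_x^{(i)}$, analytically continued from $\ket{\x^2}$ at $\lambda=0$; standard Brillouin--Wigner perturbation theory gives
\be
\ket{\x^2,\lambda}=\ket{\x^2}+\sum_{q\geq 1}\lambda^q\!\!\sum_{\y^1,\dots,\y^q}\frac{\ket{\y^q}\,V_{\y^q\y^{q-1}}\cdots V_{\y^1\x^2}}{(E_{\x^2}-E_{\y^q})\cdots(E_{\x^2}-E_{\y^1})}+\dots
\ee
(with the usual orthogonality restrictions), so the overlap with $\ket{\x^1}$ is a sum over paths on the hypercube from $\x^2$ to $\x^1$. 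Since each application of $V$ flips exactly one bit, the shortest path has length exactly $n=d_H(\x^1,\x^2)$, so $A_{12}(\lambda)=c\,\lambda^n+O(\lambda^{n+1})$ and the leading coefficient $c$ is a sum over the $n!$ orderings in which the $n$ ``disagreeing'' bits — those on which $\x^1$ and $\x^2$ differ — are flipped, each ordering weighted by a product of energy denominators $1/\big(E_{\x^2}-E_{\y^k}\big)$ evaluated at the intermediate assignments.

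The key step is to identify those energy denominators. Along any geodesic path, an intermediate assignment $\y$ differs from $\x^2$ in some subset $T$ of the disagreeing bits (and agrees with $\x^1$ on those). I would compute $E_\y-E_{\x^2}=f(\y)-f(\x^2)$ directly from the cost function $f(\x)=\sum_C(x_{i_C}+x_{j_C}+x_{k_C}-1)^2$: since both $\x^1$ and $\x^2$ are solutions (zero cost), every clause is satisfied by both, and a clause's contribution to $f(\y)-f(\x^2)$ depends only on how many of its bits lie in $T$ and on the clause type restricted to the disagreeing bits. The crucial observation is that only clauses containing at least one disagreeing bit matter, and — because both $\x^1$ and $\x^2$ satisfy every clause while differing on the disagreeing bits — each such clause must contain \emph{an even number}, hence exactly $0$ or $2$, of disagreeing bits; a clause with two disagreeing bits $i,j$ forces $x_i\neq x_j$ in any solution, i.e. it acts exactly like a Disagree clause on the pair $(i,j)$, contributing $(\text{indicator that exactly one of }i,j\in T)$ to the energy difference. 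Clauses with one disagreeing bit cannot be satisfied by both solutions simultaneously given the EC3 constraint unless the third bit compensates — here I would check the arithmetic and argue such clauses contribute only a constant shift or higher-order corrections that do not affect the leading coefficient. Collecting this, the energy denominator at intermediate set $T$ equals (up to an overall constant absorbed into $\lambda$-independent normalization) the number of Disagree clauses violated by flipping the bits in $T$, which is precisely the cost function of the induced Disagree instance on the $n$ disagreeing bits with edge set $\{(i,j): J_{ij}\neq 0,\ i,j\text{ both disagreeing}\}$.

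Finally I would check bipartiteness and connectedness: the two EC3 solutions $\x^1,\x^2$ induce a 2-colouring of the disagreeing bits (colour by the value in $\x^2$), and every Disagree edge joins bits of opposite colour by construction, so the graph is bipartite; connectedness of the relevant component follows because a disconnected induced graph would, by the argument of Lemma~\ref{lem:correction-disconnected}, make the corresponding contribution to the amplitude factorize and in fact vanish at the geodesic order unless each factor is itself traversed, so only the connected component linking the two ``all-0/all-1'' patterns contributes — and on that component $n$ equals the Hamming distance. Matching the path sum for $A_{12}$ against the analogous Brillouin--Wigner expansion of the ground-to-excited overlap for the Disagree Hamiltonian $H_P^{\mathrm{Dis}}=\frac{m}{2}\id+\frac14\sum\tilde J_{ij}\sigma_z^{(i)}\sigma_z^{(j)}$ perturbed by $-\sum_{i\in[n]}\sigma_x^{(i)}$ then gives the claimed equality to leading order.

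The main obstacle I anticipate is the clause-by-clause bookkeeping of $E_\y-E_{\x^2}$: one must verify carefully that clauses touching exactly one disagreeing bit either do not occur (for a ``cleaned'' instance with isolated solutions) or contribute only to subleading orders, and that the energy denominators along a geodesic path genuinely coincide with the Disagree cost at the intermediate assignment rather than merely being bounded by it — getting the constants exactly right is what makes the reduction an equality and not just an order-of-magnitude statement.
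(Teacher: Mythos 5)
Your proposal follows essentially the same route as the paper's proof: expand the tunneling amplitude to leading order $\lambda^n$ as a sum over the $n!$ orderings of flips of the $n$ disagreeing bits, fix and discard the agreeing bits and the clauses touching only them, and use the parity argument that a clause satisfied by both $\x^1$ and $\x^2$ must contain exactly $0$ or $2$ disagreeing bits which moreover take opposite values, so each relevant clause becomes a Disagree clause and the induced graph is bipartite, with the energy denominators coinciding exactly with the Disagree cost function. The only remark is that your hedge about clauses with a single disagreeing bit is unnecessary (your own parity observation already shows such clauses cannot occur, so no ``constant shift or higher-order correction'' analysis is needed), and the connectedness digression is not required by the claim.
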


\begin{proof}
By perturbation theory, the amplitude of the $\x^{1}$ to $\x^{2}$ transition is given by
\be
A_{12} = \lambda^n \sum_{\y^1,\y^2,\dots ,\y^{n-1}}\frac{V_{\x^{1}\y^1}V_{\y^1\y^2}\dots V_{\y^{n-2}\y^{n-1}}V_{\y^{n-1}\x^{2}}}{E_{\x^{2}\y^1}E_{\x^{2}\y^2}\dots E_{\x^{2}\y^{n-1}}}+O(\lambda^{n+1}),
\ee
where the sum is over bit strings $\y^i\in\{0,1\}^N$. Since $\x^{2}$ is a solution, we have $E_{\x^{2}}=0$ and $E_{\x^{2}\y^i}=-E_{\y^i}$. Moreover, we have $V_{\x\y}=0$ unless $d_H(\x,\y)=1$, so each two successive strings in a path $(\x^{1},\y^1,\y^2,\dots ,\y^{n-1},\x^{2})$ should only differ in one bit, otherwise the corresponding term is zero. Let us assume w.l.o.g. that the bits are labeled so that $x^{1}_i\neq x^{2}_i$ for $1\leq i\leq n$, and $x^{1}_i=x^{2}_i$ otherwise. For a given permutation $p$ in the symmetric group $S_n$, we will denote by $\y^{(p,j)}$ the string obtained from $\x^{1}$ by flipping bits $p(1),p(2),\ldots,p(j)$. We may now write the tunneling amplitude as
\be
|A_{12}| = \lambda^n \sum_{p\in S_n}\frac{1}{E_{\y^{(p,1)}}E_{\y^{(p,2)}}\dots E_{\y^{(p,n-1)}}}+O(\lambda^{n+1}).
\ee
Note that for all the strings in this expression, the bits $y^{(p,j)}_i=x^{1}_i=x^{2}_i$ are constant for $i>n+1$, so these bits are irrelevant and we may consider the restriction of the EC3 instance where these bits are fixed to the same value as for $\x^{1}$ and $\x^{2}$. In that case, all clauses only involving irrelevant bits are trivially satisfied, since they are satisfied for the solutions $\x^{1},\x^{2}$, so these clauses are also irrelevant and may be discarded. Starting from a satisfying assignment for three bits in an EC3 clause, flipping one or three of these bits always makes the clause violated, so relevant clauses may only involve two relevant and one irrelevant bit. Moreover, the relevant bits have to differ in the original assignment to keep the clause satisfied as we flip both of them, so the two relevant bits $i_C,j_C$ in a relevant clause are such that $x^{1}_{i_C}\neq x^{1}_{j_C}$ (similarly for $\x^{2}$). Therefore, discarding irrelevant bits, each relevant clause reduces to a Disagree clause $(x_{i_C}\neq x_{j_C})$. Since the instance of Disagree that we obtain admits two solutions (the restrictions of $\x^{1},\x^{2}$ to the first $n$ bits), the associated graph does not contain any odd cycle and is therefore bipartite.
\end{proof}
Since the graph associated with the Disagree instance is bipartite, we may further reduce it to an Agree instance (by negating all bits in one of the partitions). Therefore, it suffices to study the tunneling amplitude for the Agree problem.

\subsection{Upper bound for the tunneling amplitude\label{sec:upper-bound-amplitude}}
\begin{lemma}
 For any connected graph $G$ on $n$ vertices, the tunneling amplitude between the all-0 and all-1 solutions of the associated Agree instance is $|A_{12}|\leq \frac{1}{2} (2\lambda)^n+O(\lambda^{n+1})$.
\end{lemma}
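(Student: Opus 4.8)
The idea is to turn the statement into a purely combinatorial identity about spanning trees and then kill it with a short telescoping induction.

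\emph{From amplitude to a sum over cuts.} By the same perturbative computation as in the proof of the preceding Claim, applied now to the Agree Hamiltonian with the all-$0$ and all-$1$ solutions (both of zero cost), the tunneling amplitude at leading order is
\[
|A_{12}|\;=\;\lambda^n\sum_{p\in S_n}\ \prod_{j=1}^{n-1}\frac{1}{\mathrm{cut}_G(T^p_j)}\;+\;O(\lambda^{n+1})\;=:\;\lambda^n\,S(G)+O(\lambda^{n+1}),
\]
where $p$ runs over the orderings of the $n$ bits, $T^p_j$ is the set of the first $j$ flipped bits along $p$, and I have used that for an Agree instance the cost of the configuration obtained by flipping exactly the bits in $T$ equals $\mathrm{cut}_G(T)$, the number of edges of $G$ with exactly one endpoint in $T$ (these are precisely the violated clauses). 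Since $G$ is connected, $\mathrm{cut}_G(T^p_j)\ge 1$ for $1\le j\le n-1$, so everything is finite and it suffices to prove $S(G)\le 2^{n-1}$.

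\emph{Reduction to trees and the inductive set-up.} Fix any spanning tree $\mathcal T\subseteq G$. Deleting edges only shrinks cuts, and $\mathcal T$ is still connected, so $\mathrm{cut}_G(T^p_j)\ge\mathrm{cut}_{\mathcal T}(T^p_j)\ge 1$ term by term, whence $S(G)\le S(\mathcal T)$. I claim $S(\mathcal T)=2^{n-1}$ for \emph{every} tree on $n$ vertices, by induction on $n$ (base case $n=1$: the empty product gives $S=1$). Pick a leaf $\ell$ of $\mathcal T$ with neighbour $u$, put $\mathcal T'=\mathcal T-\ell$, and group the orderings of $V(\mathcal T)$ by the ordering $\pi'$ of $V(\mathcal T')$ obtained by erasing $\ell$; each $\pi'$ comes from the $n$ orderings $\pi_k$ in which $\ell$ occupies position $k$. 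Writing $c_i:=\mathrm{cut}_{\mathcal T'}(\{\text{first }i\text{ vertices of }\pi'\})$ (so $c_0=c_{n-1}=0$) and $m$ for the position of $u$ in $\pi'$, and using $\mathrm{cut}_{\mathcal T}(T)=\mathrm{cut}_{\mathcal T'}(T\setminus\{\ell\})+\mathbf 1\{\text{edge }\ell u\text{ crosses }T\}$, one finds $w_{\mathcal T}(\pi_k)=w_{\mathcal T'}(\pi')\,R_k$ with
\[
R_k=\frac{1}{c_{k-1}+1}\prod_{i=k}^{m-1}\frac{c_i}{c_i+1}\ \ (k\le m),\qquad R_k=\frac{1}{c_{k-1}+1}\prod_{i=m}^{k-2}\frac{c_i}{c_i+1}\ \ (k> m).
\]
The induction closes once we show $\sum_{k=1}^n R_k=2$.

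\emph{The telescoping, and conclusion.} This is the only genuine computation. With $P_k:=\prod_{i=k}^{m-1}\frac{c_i}{c_i+1}$ (so $P_m=1$) we have $P_{k-1}=\frac{c_{k-1}}{c_{k-1}+1}P_k$, hence $R_k=P_k-P_{k-1}$ for $1\le k\le m$ and $\sum_{k=1}^m R_k=P_m-P_0=1$, because $c_0=0$ makes $P_0=0$. A mirror-image computation with $Q_k:=\prod_{i=m}^{k-2}\frac{c_i}{c_i+1}$ gives $R_k=Q_k-Q_{k+1}$ for $m<k\le n$ and $\sum_{k=m+1}^n R_k=Q_{m+1}-Q_{n+1}=1$, since $c_{n-1}=0$ makes $Q_{n+1}=0$. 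Therefore $\sum_k w_{\mathcal T}(\pi_k)=2\,w_{\mathcal T'}(\pi')$ for every $\pi'$, so $S(\mathcal T)=2\,S(\mathcal T')=2^{n-1}$, and then $S(G)\le 2^{n-1}$, giving $|A_{12}|\le 2^{n-1}\lambda^n+O(\lambda^{n+1})=\tfrac12(2\lambda)^n+O(\lambda^{n+1})$. I expect the only real obstacle to be the bookkeeping in the middle step — correctly tracking how each intermediate cut changes as $\ell$ is inserted (the two cases depending on whether $\ell$ precedes $u$, plus the boundary values $c_0=c_{n-1}=0$) and then spotting the telescoping; the spanning-tree reduction and the energy/cut dictionary are routine. (If several clauses can force the same pair of bits to disagree, read $\mathrm{cut}$ as a weighted cut — the argument is unchanged.)
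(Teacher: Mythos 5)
Your proof is correct and takes essentially the same route as the paper: both pass to a spanning tree (since removing clauses only lowers the intermediate energies/cuts) and then prove the exact identity $\sum_{p}A_p=2^{n-1}$ for trees by induction, grouping the $n$ orderings according to the position of the extra vertex and telescoping to obtain the factor $2$. The only difference is cosmetic — you delete a leaf where the paper attaches a new vertex, and you write out explicitly the telescoping sum that the paper labels ``straightforward to check.''
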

From there, we can conclude that if an avoided crossing happens at $\lambda=\lambda_c$ the minimal gap will scale as
\be
\Delta=O\left(\left(\frac{\lambda_c}{\lambda_a}\right)^n\right),
\ee
for some constant $\lambda_a>\frac{1}{2}$, so that it becomes exponentially small if $\lambda_c<\frac{1}{2}$.
In Appendix~\ref{Tunneling-amplitude-estimation}, we also provide an estimation of the tunneling amplitude and show that the estimation matches the upper bound up to a constant. In particular, we find $\lambda_a\approx 0.81$ for $\alpha=0.62$.

\begin{proof}
 For a given permutation $p\in S_n$, let $A_p=\prod_{j=1}^{n-1} (E_{\y^{(p,j)}})^{-1}$. We need to prove that $\sum_{p\in S_n}A_p\leq2^{n-1}$. We will actually show that for any tree, $\sum_{p\in S_n}A_p=2^{n-1}$. Since the energies $E_{\y^{(p,j)}}$ can only decrease when removing clauses, the amplitude for a spanning tree of a graph is an upper bound on the amplitude for the graph itself, so this implies the lemma.

We prove that $\sum_{p\in S_n}A_p=2^{n-1}$ for all trees by induction on the size of the tree $n$.
For $n=2$, the only tree consists in two vertices connected by an edge, corresponding to a unique clause $(x_1=x_2)$. There are only two permutations $p\in S_2$, and since flipping either bit will violate the clause, we have $A_p=1$ for each $p$ and finally $A_{12}=2$.

Suppose that $\sum_{p\in S_n}A_p=2^{n-1}$ for all trees of size $n$. We show that this implies that $\sum_{p\in S_{n+1}}A_p=2^{n}$ for all trees of size $n+1$. The set of trees of size $n+1$ may be generated by attaching an additional vertex to any possible vertex of any possible tree of size $n$. Let us now consider a particular tree of size $n$, and suppose we attach an $(n+1)$-th vertex to the vertex number $k$. Let us represent a permutation $p\in S_n$ as a vector $p=(p(1),\ldots,p(n))$. Then all permutations $p'\in S_{n+1}$ may be obtained by considering all permutations $p\in S_n$, and inserting element $(n+1)$ in all possible positions. Let $p_l\in S_{n+1}$ be the permutation obtained from $p\in S_n$ by inserting element $(n+1)$ after element $p(l)$, that is, $p_l=(p(1),\ldots,p(l),n+1,p(l+1),\ldots,p(n))$, where $0\leq l\leq n$ (for $l=0$, the element is inserted in first position, before $p(1)$). We then have $\sum_{p'\in S_{n+1}}A_{p'}=\sum_{p\in S_n}\sum_{l=0}^n A_{p_l}$. We will now show that for any $p\in S_n$, we have $\sum_{l=0}^n A_{p_l}=2A_{p}$, whatever the tree of size $n$ we start with and the vertex $k$ where we attach the additional vertex. This allows to conclude the proof.

Let us fix a permutation $p\in S_n$, and write $E_j=E_{\y^{(p,j)}}$, so that $A_p=\prod_{j=1}^{n-1} (E_j)^{-1}$.
We need to compute $\sum_{l=0}^n A_{p_l}$ for the new Agree instance, obtained by attaching an $(n+1)$-th vertex to the vertex number $k$ of the associated graph, or equivalently introducing a new clause $(x_k=x_{n+1})$. Let $k'=p^{-1}(k)$. Starting from the all-0 solution, let us study how the energy changes as we flip bits in the order specified by $p_l$. For $0\leq l\leq k'-1$ the additional clause will be violated as soon as we flip the new bit $n+1$ (in position $l$), and until we flip bit $k$ (in position $k'$). Therefore, we have
$$
A_{p_l}=\prod_{j=1}^l (E_j)^{-1}\prod_{j'=l}^{k'-1} (E_{j'}+1)^{-1}\prod_{j''=k'}^{n-1} (E_{j''})^{-1},
$$
and it is straightforward to check that $\sum_{l=0}^{k'-1} A_{p_l}=\prod_{j=1}^{n-1} (E_j)^{-1}=A_p$. Similarly, for $k' \leq l \leq n$ we also obtain $\sum_{l=k'}^{n} A_{p_l}=A_p$, and therefore $\sum_{l=0}^{n} A_{p_l}=2A_p$ as claimed.
\end{proof}

\section{Anderson localization and the applicability of perturbation theory\label{sec:applicability}}
As shown in Section~\ref{sec:avoided-crossing}, the spectrum of $H(\lambda)$ presents avoided crossings for $\lambda$ close to zero, and the position of the first avoided crossing $\lambda_c$ is expected to scale as $O(N^{-1/8})$. The presence of an avoided crossing makes the perturbation expansion divergent for $\lambda>\lambda_c$. This means that, strictly speaking, the convergence radius of the perturbation theory scales as $\lambda_c=O(N^{-1/8})$, which tends to zero as $N$ becomes large.

However, this does not imply that we can not use perturbation theory to estimate the spectrum of $H(\lambda)$ beyond $\lambda_c$. For $\lambda>\lambda_c$, the perturbation expansion will be asymptotic rather than convergent. While asymptotic expansions do not generally converge, they may still provide accurate approximations as long as only a finite number of terms are considered. In our case, the divergence is due to an avoided crossing between two levels corresponding to assignments at a Hamming distance $n=\Theta(N)$ from each other. This means that the divergence will only appear at order $n=\Theta(N)$ in perturbation theory, so that successive orders of the perturbation expansion will provide better and better approximations to the spectrum of $H(\lambda)$, as long as we stop at a finite order. On the other hand, for orders higher than $n$, the expansion will start diverging and the approximation will become less and less accurate. The following lemma may be used to obtain an upper bound on the error of the estimation of an eigenvalue.
\begin{lemma}
 Let $\ket{\tilde{\psi}}$ and $\tilde{E}$ be estimations of an eigenstate and corresponding eigenvalue of a Hamiltonian $H$. Then $H$ admits an eigenvalue $E$ such that $|E-\tilde{E}|\leq \norm{(H-\tilde{E}\id)\ket{\tilde{\psi}}}$.
\end{lemma}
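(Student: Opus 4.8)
The plan is to reduce the claim to an elementary inequality about the weight of $\ket{\tilde\psi}$ in the eigenbasis of $H$, using only the fact that $H$ is Hermitian (which holds for every Hamiltonian considered in this paper). Let $H=\sum_k E_k\proj{E_k}$ be a spectral decomposition with $\{\ket{E_k}\}$ an orthonormal eigenbasis and $E_k\in\R$, and assume without loss of generality that $\ket{\tilde\psi}$ is normalized; if it is not, dividing the stated inequality through by $\norm{\ket{\tilde\psi}}$ only strengthens it. Expand $\ket{\tilde\psi}=\sum_k c_k\ket{E_k}$, so that $\sum_k|c_k|^2=1$.

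First I would compute the residual norm directly in this basis:
\be
\norm{(H-\tilde E\id)\ket{\tilde\psi}}^2=\sum_k|E_k-\tilde E|^2\,|c_k|^2 .
\ee
Since every coefficient satisfies $|E_k-\tilde E|^2\ge\min_j|E_j-\tilde E|^2$, and the weights $|c_k|^2$ sum to $1$, the right-hand side is bounded below by $\min_j|E_j-\tilde E|^2$. Taking square roots and choosing $E$ to be an eigenvalue $E_{k^\ast}$ attaining this minimum yields $|E-\tilde E|=\min_j|E_j-\tilde E|\le\norm{(H-\tilde E\id)\ket{\tilde\psi}}$, which is exactly the assertion.

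There is no real obstacle here — this is a standard residual (Bauer--Fike / Kato--Temple type) estimate — so the only points deserving a word of care are the normalization convention for $\ket{\tilde\psi}$ and the Hermiticity of $H$ used to guarantee an orthonormal eigenbasis and real spectrum. In the intended application one takes $\ket{\tilde\psi}$ to be the perturbative approximation of an eigenstate truncated at a finite order $q<n=\Theta(N)$ and $\tilde E$ the correspondingly truncated eigenvalue; the lemma then converts the readily computable residual $\norm{(H-\tilde E\id)\ket{\tilde\psi}}$ into a rigorous bracket around a genuine eigenvalue of $H(\lambda)$, thereby bounding the error committed by stopping the asymptotic expansion before the order at which the avoided crossing makes it diverge.
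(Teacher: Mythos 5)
Your proof is correct and follows essentially the same route as the paper: the paper bounds $\bra{\tilde\psi}(H-\tilde{E}\id)^2\ket{\tilde\psi}$ below by the smallest eigenvalue of $(H-\tilde{E}\id)^2$, which is exactly your eigenbasis expansion $\sum_k|E_k-\tilde E|^2|c_k|^2\geq\min_j|E_j-\tilde E|^2$ written in operator language (both arguments implicitly use $\braket{\tilde\psi}{\tilde\psi}=1$, which you at least flag explicitly). No gap; nothing further needed.
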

\begin{proof}
Let $\epsilon=\norm{(H-\tilde{E}\id)\ket{\tilde{\psi}}}$. By definition, we have $\epsilon^2=\bra{\tilde{\psi}}(H-\tilde{E}\id)^2\ket{\tilde{\psi}}$. This implies that $(H-\tilde{E}\id)^2$ has an eigenvalue $e^2$ such that $0\leq e^2\leq\epsilon^2$. In turn, $H-\tilde{E}\id$ must have an eigenvalue $e$ with $-\epsilon\leq e\leq\epsilon$ or, equivalently, $H$ has an eigenvalue $E=\tilde{E}+e$.
\end{proof}
If the estimations are obtained from $q$-th order perturbation theory, this implies that the error on the eigenvalue may be bounded by computing the $(q+1)$-th order corrections to the eigenvalue and eigenstate. If the $(q+1)$-th order corrections are small compared to the $q$-th order estimation, we still obtain a perfectly valid estimation, even if we are outside of the convergence radius.

Of course, when $\lambda$ becomes much larger, perturbation theory will eventually fail as the error due to higher orders will not be small compared to the estimation anymore. The success of perturbation theory relies on the fact that the perturbed eigenstate is close enough to the unperturbed one. In our case, an unperturbed eigenstate corresponds to a basis state $\ket{\x}$, and the perturbed eigenstate will be close when it has a large overlap on $\ket{\x}$, and a smaller overlap on other basis states $\ket{\y}$, decreasing exponentially with the Hamming distance $d_H(\x,\y)$. Such a state is called a \emph{localized state}. Therefore, the validity of perturbation theory is intimately related to the phenomenon of Anderson localization.

Note that the Hamiltonian $H(\lambda)=H_P+\lambda H_0$ may be written as
\ber
H(\lambda)
&=&\sum_{\x\in\{0,1\}^N}E_\x\proj{\x}-\lambda\sum_{\x,\y:d_H(\x,\y)=1}\ket{\x}\bra{\y},
\eer
where the second term (kinetic energy) describes a particle hopping on the vertices of a hypercube, and the first term (potential) adds random energies to each vertex of the hypercube (here the randomness comes from the distribution of random instances).
This expression is therefore very similar to the Hamiltonian used by Anderson to demonstrate his famous localization effect in disordered systems~\cite{anderson58}, the main difference being that the geometry of the system is the $N$-dimensional hypercube instead of the $d$-dimensional lattice. The key feature of Anderson localization is that for large enough disorder, or in our terms for small enough $\lambda$, the eigenstates of $H(\lambda)$ are localized (i.e., the particle is bound to a vertex of the hypercube) so that perturbation theory is applicable in this regime.
On the other hand, for large $\lambda$, the first term (corresponding to the disorder) is negligible and the eigenstates of $H(\lambda)$ are close to the eigenstates of $H_0$, describing waves travelling through the hypercube.
Therefore, for large $\lambda$, the eigenstates of $H(\lambda)$ will have amplitude over all basis states $\ket{\x}$ and be called \emph{extended}.

The transition from localized to extended states marks the failure of perturbation theory, and the position $\lambda_r$ of this transition may be viewed as a weak notion of convergence radius, as shown by Anderson. To circumvent the divergence of perturbation theory due to avoided crossings close to $\lambda=0$, Anderson proposed the following resolution. He added a small imaginary part $i\eta$ to all energies $E_\x$ of the unperturbed Hamiltonian. After this, perturbation theory provides a convergent series, which Anderson called locator expansion. Then, he took the limit $\eta\to 0$, but only after taking the limit $N\to\infty$. Recall that the original divergence of the perturbation expansion started at order $n=\Theta(N)$. By taking $N\to\infty$ before $\eta\to 0$, this keeps the series convergent and the radius of convergence $\lambda_r$ then corresponds to the transition from localized to extended states (note that if we take the limits in the other order, we would obtain the original radius of convergence $\lambda_c$ due to avoided crossings, which goes to zero as $O(N^{-1/8})$).

For conventional Anderson localization, the contribution $i\eta$ to the energies can be due to a weak coupling to a thermal bath, or to a possible escape of the particle through the boundary of the system. In the case of adiabatic quantum optimization, the basic idea is that $i\hbar/T$ (where $T$ is the total computation time) would play the same role as $i\eta$. Indeed let $\Delta$ be the eigenvalue gap created by an avoided crossing. During an evolution of time $T$, the system cannot resolve energies smaller than $\hbar/T$. If $\Delta\ll\hbar/T$, the system will therefore behave as if the avoided crossing was an actual crossing between two decoupled levels, and the divergence of the perturbation expansion due to this avoided crossing will disappear. We have shown in the previous section that $\Delta$ is exponentially small, which implies that this argument is consistent unless we consider an exponentially long time $T>\hbar/\Delta$. While a complete analysis of Anderson localization in adiabatic quantum optimization could be the subject of an article on its own, the only property that we need is that the position of the transition $\lambda_r$ decreases slower than the position of the first avoided crossing $\lambda_c$. For Anderson localization on a $d$-dimensional lattice with random energies of order $E$, 
$\lambda_r$ is believed to scale as $\Theta({E}/{(d\ln d)})$~\cite{efetov,abou-chacra}. In our case, both the degree of vertices and the random energies scale as $\Theta(N)$, so we can expect by analogy that $\lambda_r=\Omega(1/\ln N)$, which is sufficient for our result since $\lambda_c=O(N^{-1/8})$. This is probably pessimistic since the conventional Anderson localization considers the critical $\lambda$ for which \emph{all} eigenstates are still localized, while we are only interested in low energy eigenstates which remain localized much longer (as the density of eigenvalues is much larger in the middle of the spectrum, this is where the first extended states appear). Moreover, since we have shown in Section~\ref{sec:perturbation} that all terms in the perturbation expansion scale as $\Theta(N)$, meaning that dividing the expansion by $N$ gives an expression which is essentially $N$-independent, it is actually likely that $\lambda_r=\Theta(1)$ for low energy states.

\section{Conclusion and discussion}
Putting everything together, we conclude that the adiabatic quantum optimization algorithm fails for random instances of EC3 because of the presence of an exponentially small gap close to the end of the evolution (at $s=1$ or $\lambda=0$). This result relies on different elements.
In Section~\ref{sec:perturbation}, we have shown that all corrections in the perturbation expansion of eigenenergies of $H(\lambda)$ scale as $\Theta(N)$. In Section~\ref{sec:avoided-crossing}, we have shown that perturbation theory predicts an avoided crossing to occur for some $\lambda_c=O(N^{-1/8})$, and confirmed this scaling by numerical simulations.
In Section~\ref{sec:applicability}, we have also argued that by analogy to the usual Anderson model, we can expect states to be localized for small enough $\lambda<\lambda_r=\Omega(1/\log N)$, so that perturbation theory is applicable in this regime.
Since, as $N$ increases, $\lambda_c$ decreases faster than $\lambda_r$, for large enough $N$ the expected avoided crossing occurs for $\lambda_c<\lambda_r$, where the perturbation expansion is accurate, so that the prediction of perturbation theory becomes valid. Finally, in Section~\ref{sec:gap-scaling}, we have shown that the gap induced by an avoided crossing at $\lambda_c$ will scale as $\Delta=O(\lambda_c/\lambda_a)^n$, where $\lambda_a$ is a constant and $n$ is the Hamming distance between the assignments corresponding to the two levels. For large instances close to the satisfiability threshold $\alpha_s$, random instances will only have a few isolated solutions which are essentially independent from each other, so that $n=\Theta(N)$. Moreover, since $\lambda_c=O(N^{-1/8})$, this implies that the gap will scale as $\Delta=2^{-O(N\log N)}$, and therefore the algorithm will fail unless it takes (super-)exponential time. It is interesting to note that this scaling matches a lower bound proved by van Dam and Vazirani (see~\cite{vdam2}, the proof is given in Appendix~\ref{sec:lower-bound} for completeness), so that the typical gap for random instances is really as small as it can get. The fact that the gap is actually super-exponentially small also implies that the complexity of the AQO algorithm will be even worse than classical solvers (including the naive algorithm that exhaustively checks all possible assignments).

While our results predict an exponentially small gap to occur closer and closer to the end of the evolution (at $s=1$), an obvious question is why this was not observed in previous numerical simulations of the gap, where it seemed to scale polynomially and occur close to a fixed $s$, at least for small $N$ (up to $N\lesssim 20$ in~\cite{fggllp01,hogg03}, 60 in~\cite{bolpr06} and 128 in~\cite{yks08}, each using different methods). As for the position of the gap, note that it scales as $\lambda_c=O(N^{-1/8})$, so the dependence on $N$ is weak and we need to consider a very broad range of $N$ to observe it. More importantly, $N$ must also be large enough so that the avoided crossing happens in a region where perturbation theory is applicable, i.e., $\lambda_c<\lambda_r$, and where the tunneling amplitude decreases exponentially, i.e., $\lambda_c<\lambda_a$. Since we have found the condition $\lambda_c>\sqrt{2}\ (C^{(4)}N)^{-1/8}$ for an avoided crossing to occur at $\lambda_c$, and obtained the estimations $\lambda_r\approx 0.52$ and $\lambda_a\approx 0.81$, this implies the bound $N>\frac{16}{C^{(4)}\lambda_r^8}\approx 86000$ bits to observe the exponentially small gap predicted by our approach. While this number is very high, practically the exponentially small gap could appear much faster. A first reason is that in many cases, an additional clause will only increase the energy by 1, and not the maximum 4, due to the form of the cost function in Eq.~(\ref{eq:cost-function}). If we only impose that $E_{12}(\lambda)>1$, we see that small gaps already occur with high probability for $N>\frac{1}{C^{(4)}\lambda_r^8}\approx 5400$. A second reason is that in our numerical simulations, we only considered avoided crossings created by the addition of the last clause, but any of the $M-1$ other clauses could induce other avoided crossings, so that it could be possible to observe this exponentially small gap as soon as $N$ is of the order of a few hundred bits. In particular, let us note that very recent numerical simulations up to $N=256$ by Young~\textit{et al.} have revealed that more and more instances exhibit a first order phase transition and therefore a very small gap~\cite{yks09}, which could be due to the mechanism described in the present paper. In Figure~\ref{fig:simulated-crossing-final}, we plotted a level crossing predicted by fourth order perturbation theory for an instance with $N=200$ obtained during our numerical simulations. The crossing occurs at $\lambda\approx 0.51$ (possibly inside the region where perturbation theory is valid), and the corresponding assignments are at distance $n=60$ from each other.

\begin{figure}
\begin{center}
\includegraphics[width=250pt]{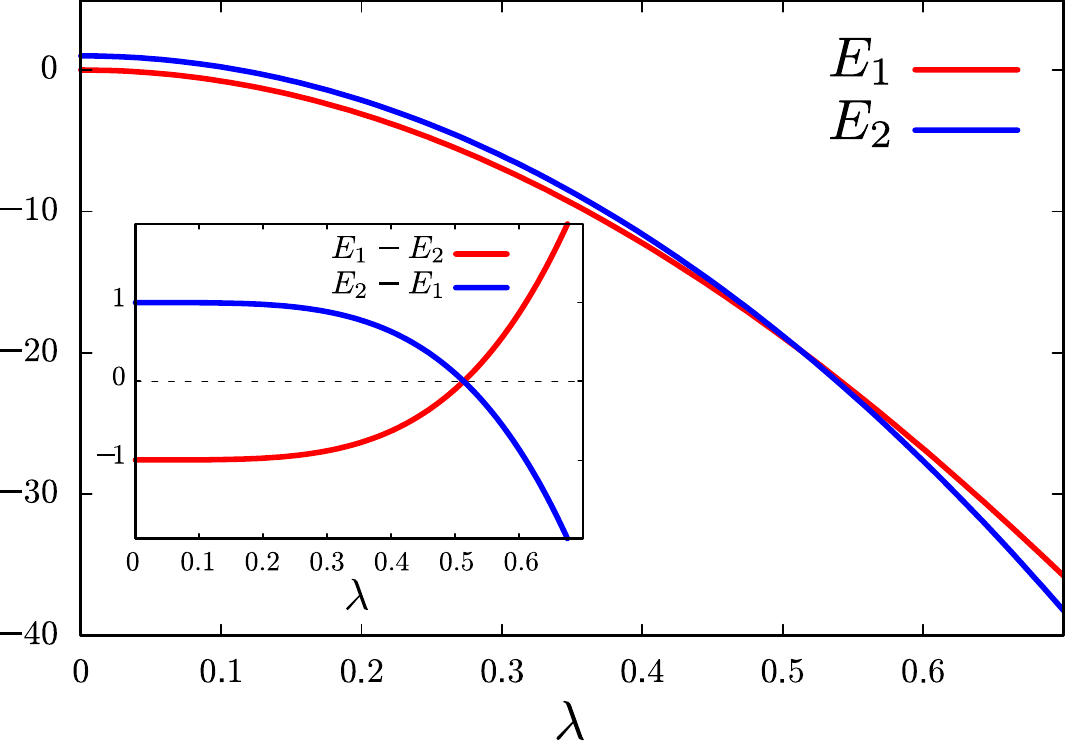}\hspace*{0.5cm}
\end{center}
 \caption{Simulation of a level crossing for a random instance with $N=200$ bits and $\alpha\approx 0.62$, obtained by fourth order perturbation theory. Inset: To make the crossing more apparent, we plotted the energy differences $E_1-E_2$ and $E_2-E_1$.\label{fig:simulated-crossing-final}}
\end{figure}

While our argument immediately implies that a particular quantum adiabatic algorithm for EC3 will fail, it is important to note that it also applies to more general cases. In particular, it does not rely on the specific form of the problem Hamiltonian $H_P$, but rather on its general statistical properties, so that it should extend to other NP-complete problems such as 3-SAT. Moreover, our argument does not rely on the precise form of the initial Hamiltonian $H_0$ either, or even on the possible path between the initial and the final Hamiltonian, but only on the behavior of the perturbation in the vicinity of the final Hamiltonian $H_P$, and we would actually obtain similar conclusions for any perturbation acting locally on the qubits. Therefore, any adiabatic quantum algorithm aimed at solving a similar NP-complete optimization problem by mapping its cost function to a final Hamiltonian would eventually fail for the same reason. Actually, for other problems the situation could be even worse, in that the small gap could occur for smaller instances than for EC3. Indeed, while for EC3 the first non-zero correction to the splitting $E_{12}(\lambda)$ between two solutions only appears at order 4 of perturbation theory, for many other problems, such as 3-SAT, a similar Hamiltonian would already exhibit a splitting at order 2, so that the position of the avoided crossing would scale as $\lambda_c=O(N^{-1/4})$, and therefore this avoided crossing would happen inside the region where perturbation theory is valid for even smaller instances.

\begin{acknowledgments}
 We thank A. Childs, E. Farhi, J. Goldstone, S. Gutmann, M. R\"otteler and A. P. Young for interesting discussions. We thank the \emph{High Availability Grid Storage} department of NEC Laboratories America for giving us access to their cloud to run our numerical simulations. This research was supported in part by US DOE contract No. AC0206CH11357.
\end{acknowledgments}

\bibliography{adiabatic-full-paper}

\appendix

\section{Proof of Lemma~\ref{lem:mean-size-connected}\label{app:number-graphs}}
The proof of the lemma relies on the following claim.
\begin{claim}\label{claim:proba-connected}
 For any $u\in\mathbb{N}$, there exists $g_u^L,g_u^U\in\mathbb{R}$ and $N_u\in\mathbb{N}$ such that $\frac{g_u^L N}{\binom{N}{u}}\leq\Pr[\textrm{``$[u]$ is connected''}]\leq\frac{g_u^U N}{\binom{N}{u}}$ for any $N\geq N_u$.
\end{claim}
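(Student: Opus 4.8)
The plan is to prove the slightly stronger (and, for Lemma~\ref{lem:mean-size-connected}, equivalent) statement that $\Pr[\text{``}[u]\text{ is connected''}] = \Theta(N^{1-u})$ with $N$-independent implied constants; since $N/\binom{N}{u} = \Theta(N^{1-u})$, this yields the claim, and then $\mean{G_u} = \binom{N}{u}\,\Pr[\text{``}[u]\text{ connected''}]$ (by relabelling symmetry of the random-instance distribution) yields Lemma~\ref{lem:mean-size-connected}. Two elementary estimates drive everything, both valid because $M=\lfloor\alpha N\rfloor=\Theta(N)$ makes the occupancies $M/\binom{N}{3}$ and $M(N-2)/\binom{N}{3}$ tend to $0$: for a fixed pair $\{i,j\}$, $\Pr[(i,j)\in G] = 1-\big(1-\tfrac{N-2}{\binom{N}{3}}\big)^M = \tfrac{6\alpha}{N}(1+o(1))$; and for any fixed family of $k$ distinct triples, the probability that each of them occurs among the $M$ clauses equals $\big(1-(1-\binom{N}{3}^{-1})^M\big)^k(1+o(1)) = (6\alpha)^k N^{-2k}(1+o(1))$, where the cross terms of the inclusion--exclusion are lower order precisely because $M/\binom{N}{3}\to 0$.

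For the \textbf{lower bound} I exhibit one explicit event that forces $[u]$ to be connected and has probability $\Theta(N^{1-u})$. When $u$ is odd, consider the ``chained'' triples $\{1,2,3\},\{3,4,5\},\{5,6,7\},\dots$, i.e., triple number $i$ is $\{2i-1,2i,2i+1\}$; the first $(u-1)/2$ of them cover all of $\{1,\dots,u\}$ and consecutive ones share a vertex, so if each of these $(u-1)/2$ triples occurs among the $M$ clauses then $G_{[u]}$ is connected, an event of probability $(6\alpha)^{(u-1)/2}N^{-(u-1)}(1+o(1))$ by the triples estimate above. When $u$ is even there is a parity mismatch for one vertex; I handle it by chaining only over $\{1,\dots,u-1\}$ (using $(u-2)/2$ triples) and additionally requiring that \emph{some} clause contains the pair $\{1,u\}$, then lower-bounding the probability of this combined event by a Bonferroni inequality over the $N-2$ possible third vertices of such a clause: $(N-2)\cdot\Theta(N^{-u}) - \binom{N-2}{2}\cdot\Theta(N^{-u-2}) = \Theta(N^{1-u})$.

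For the \textbf{upper bound}, observe that if $[u]$ is connected then $G_{[u]}$ contains a spanning tree, so $\Pr[\text{``}[u]\text{ connected''}] \le \sum_T \Pr[E(T)\subseteq G]$, the sum being over the $u^{u-2}$ labelled trees $T$ on $[u]$. Fix $T$ and bound $\Pr[E(T)\subseteq G]$ by a union bound over all ways to assign to each of the $u-1$ tree edges a ``witnessing'' clause index in $[M]$ that contains both endpoints of that edge. Since a tree has no triangle, a single clause (three vertices) can witness at most two tree edges, and only if those two edges form a path; hence in any surviving assignment the used clause indices split into $k_1$ witnessing two edges each and $k_2$ witnessing one edge, with $2k_1+k_2 = u-1$. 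For a fixed combinatorial type the number of assignments is $O(M^{k_1+k_2})$, and for each the probability over the independent clauses is $\big(\binom{N}{3}^{-1}\big)^{k_1}\big(\tfrac{N-2}{\binom{N}{3}}\big)^{k_2} = \Theta(N^{-3k_1-2k_2})$; the product is $O(N^{k_1+k_2-3k_1-2k_2}) = O(N^{-2k_1-k_2}) = O(N^{-(u-1)})$. Summing over the finitely many trees and types gives $\Pr[\text{``}[u]\text{ connected''}] = O(N^{-(u-1)})$, completing the claim.

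The \textbf{main obstacle} is the bookkeeping in the upper bound. One must correctly recognise that a single clause witnesses at most two tree edges and never a triangle or two disjoint edges, and that the exponent $u-1$ arises only from the exact balance between the number $\Theta(N^{k_1+k_2})$ of clause-slot choices and the per-configuration probability $\Theta(N^{-3k_1-2k_2})$ --- any cruder accounting (e.g., merely bounding the number of clauses needed) produces a wrong exponent. One must also check that all implied constants and $o(1)$ terms are uniform in $T$ and genuinely $N$-independent. Both points rest on $M=\Theta(N)$, which keeps $M/\binom{N}{3}$ and $M(N-2)/\binom{N}{3}$ small so that every union bound and inclusion--exclusion is dominated by its leading term. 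The even-$u$ case of the lower bound, where the missing factor of $N$ can only come from the freedom in a clause's third vertex, is the one place where the extra Bonferroni step is unavoidable.
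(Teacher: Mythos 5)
Your proposal is correct, and it reaches the claim by a genuinely different route than the paper, especially for the upper bound. For the lower bound, the paper forces connectivity of $[u]$ by sequentially attaching each new vertex $q+1$ to $[q]$ through a clause whose third vertex is left free (each attachment contributing a factor $\Theta(M\cdot N^{-2})=\Theta(N^{-1})$), whereas you exhibit a chain of fully specified overlapping triples (factor $\Theta(M\cdot N^{-3})=\Theta(N^{-2})$ per triple, one triple per two new vertices) plus a Bonferroni patch with a free third vertex to fix the parity when $u$ is even; both give $\Theta(N^{1-u})$, and the paper's one-vertex-at-a-time attachment simply avoids the parity case distinction. For the upper bound the divergence is more substantial: the paper proceeds by induction on $u$, peeling off a vertex whose removal keeps $[u]$ connected and bounding the probability of the conjunction ``$[u]$ connected $\wedge$ a clause links the removed vertex,'' which requires splitting on whether the clause's third vertex lies inside or outside $[u]$ and a somewhat delicate Bayes-type manipulation to control the conditioning (including a separate lower bound on $\Pr[Cl([N],[u],[u])]$); you instead use a non-inductive first-moment argument — union bound over the $u^{u-2}$ spanning trees and over witness functions assigning a clause index to each tree edge, with the key observation that one clause can witness at most two tree edges and only if they are adjacent, so that the clause-slot count $\Theta(M^{k_1+k_2})$ exactly balances the per-configuration probability $\Theta(N^{-3k_1-2k_2})$ to give $O(N^{-(u-1)})$. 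Your route buys a self-contained argument free of conditional-probability subtleties at the cost of the witness bookkeeping (which you handle correctly, including the exclusion of disjoint-edge and triple-edge fibers); the paper's induction buys explicit recursive constants, $g_{u+1}^U=6g_u^U(\alpha u+4)$ and $g_{u+1}^L=g_u^L\alpha/8$, whereas yours are explicit only in principle through the tree and type counts. Both arguments rely on the same underlying fact that $M=\Theta(N)$ keeps the relevant occupancy probabilities small so leading terms dominate.
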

We first show that this implies Lemma~\ref{lem:mean-size-connected}, and then prove Claim~\ref{claim:proba-connected}.

\begin{proof}[Proof of Lemma~\ref{lem:mean-size-connected}]
 By definition, we have
$
G_u=\sum_{S\subseteq[N]:|S|=u} I[\textrm{``$S$ is connected''}],
$
where $I[\mathcal{E}]$ is the indicator variable of event $\mathcal{E}$.
Since all subsets of size $u$ are equivalent up to some permutation of the bits, we have
$$
\mean{G_u}=\sum_{S\subseteq[N]:|S|=u} \Pr[\textrm{``$S$ is connected''}]
=\binom{N}{u}\Pr[\textrm{``$[u]$ is connected''}],
$$
which, together with Claim~\ref{claim:proba-connected}, implies that $\mean{G_u}=\Theta(N)$.
\end{proof}

\begin{proof}[Proof of Claim~\ref{claim:proba-connected}]
Let us start with the lower bound. For subsets $S_1,S_2,S_3\subseteq[N]$, we say that a clause $(x_{i_1},x_{i_2},x_{i_3})$ is of type $(S_1,S_2,S_3)$ if $i_k\in S_k$, for $k=1,2,3$. Let
$$
p_u=\frac{u}{N(N-1)}
$$
be the probability that a given random clause is of type $([u],\{u+1\},[N])$, and let us denote by $Conn(S)$ the event ``$S$ is connected''. A sufficient condition for $[u+1]$ to be connected is that for each $q\in[u]$, there is exactly one clause of type $([q],\{q+1\},[N])$. Therefore, we have
\ber
\Pr[Conn([u+1])]&\geq& \frac{M!}{(M-u)!}(\prod_{q=1}^u p_q)(1-\sum_{q=1}^u p_q)^{M-u}\\
&\geq& \frac{M!u!}{(M-u)!N^u(N-1)^u}\left(1-\frac{(M-u)u(u+1)}{2(N-1)}\right)\\
&\geq& \frac{C_{u+1}N}{\binom{N}{u+1}},
\eer
where we have assumed that $N\geq \alpha u(u+1)+1$ so that the second factor is  larger than $1/2$, and defined $C_{u+1}$ as
$$
C_{u+1}=\binom{N}{u+1}\frac{M!u!}{2(M-u)!N^{u+1}(N-1)^u}.
$$
Note that $C_2=\alpha/4$ and
$$
C_{u+1}=\frac{C_u u(N-u)(M-u+1)}{(u+1)N(N-1)}\geq \frac{C_u\alpha}{8},
$$
for $N\geq 2u-1$ and $M\geq 2u-2$.
Therefore, the lower bound holds with $g_2^L=\alpha/4$, $g_{u+1}^L={g_u^L\alpha}/{8}$ and $N$ sufficiently large.

As for the upper bound, we prove it by induction on $u$. For $u=1$, we have $\Pr[\textrm{``$[1]$ is connected''}]\leq 1$ so the claim holds with $g_1^U=1$.
Let us denote by ``$Cl(S_1,S_2,S_3)$'' the event that there exists at least one clause of type $(S_1,S_2,S_3)$ within the $M$ clauses. Now assume that $\Pr[Conn([u])]\leq \frac{g_u^U N}{\binom{N}{u}}$ for some constant $g_u^U$. We need to bound the probability that $[u+1]$ is connected. If $[u+1]$ is connected, there exists some vertex $i\in[u+1]$ such that the graph obtained from $[u+1]$ by removing vertex $i$, i.e., $[u+1]\setminus\{i\}$, is connected as well. Therefore
\ber
&&\Pr[Conn([u+1])] \nonumber \\
&&\leq 6\sum_{i\in[u+1]}\Pr[Conn([u+1]\setminus \{i\}) \wedge Cl(\{i\},[u+1]\setminus \{i\},[N])]\nonumber\\
&&= 6(u+1) \Pr[Conn([u]) \wedge Cl(\{u+1\},[u],[N])]\nonumber\\
&&\leq 6(u+1)\Pr[Conn([u]) \wedge Cl(\{u+1\},[u],[N]\setminus[u])]\nonumber\\
&&
+6(u+1)\Pr[Conn([u]) \wedge Cl(\{u+1\},[u],[u])].\label{eq:Conn_u}
\eer
For the first term, we have
\ber
&&\Pr[Conn([u]) \wedge Cl(\{u+1\},[u],[N]\setminus[u])] \nonumber \\
&&=\Pr[Cl(\{u+1\},[u],[N]\setminus[u])]
\cdot\Pr[Conn([u])\ |\ Cl(\{u+1\},[u],[N]\setminus[u])]\nonumber\\
&&\leq\Pr[Cl(\{u+1\},[u],[N]\setminus[u])]
\cdot\Pr[Conn([u])]\nonumber\\
&&\leq M\frac{u(N-u-1)}{N(N-1)(N-2)}
\cdot\Pr[Conn([u])]\nonumber\\
&&\leq \frac{\alpha u}{N-1}
\cdot\Pr[Conn([u])].
\eer
Similarly, for the second term of Eq.~(\ref{eq:Conn_u}), we have
\ber
&&\Pr[Conn([u]) \wedge Cl(\{u+1\},[u],[u])] \nonumber \\
&&=\Pr[Cl(\{u+1\},[u],[u])]
\cdot\Pr[Conn([u])\ | \ Cl(\{u+1\},[u],[u])]\nonumber\\
&&\leq M\frac{u(u-1)}{N(N-1)(N-2)}\Pr[Conn([u])\ | \ Cl([N],[u],[u])]\nonumber\\
&&\leq \frac{\alpha u(u-1)}{(N-1)(N-2)}\frac{\Pr[Cl([N],[u],[u])\ | \ Conn([u])]\cdot\Pr[Conn([u])]}{\Pr[Cl([N],[u],[u])]}\nonumber\\
&&= \frac{\alpha u(u-1)}{(N-1)(N-2)}\frac{\Pr[Conn([u])]}{\Pr[Cl([N],[u],[u])]}.
\eer
We need to upper bound $\Pr[Cl([N],[u],[u])]$. Let
$$
q_u=\frac{u(u-1)}{N(N-1)}
$$
be the probability that a given random clause is of type $([N],[u],[u])$.
For $N\geq 2\alpha u(u-1)+1$, we obtain
$$
\Pr[Cl([N],[u],[u])]\geq Mq_u(1-Mq_u)\geq \frac{\alpha u(u-1)}{2(N-1)}\geq \frac{\alpha u(u-1)}{4(N-2)}.
$$
Therefore, the second term in Eq.~(\ref{eq:Conn_u}) satisfies
$$
\Pr[Conn([u]) \wedge Cl(\{u+1\},[u],[u])]
\leq\frac{4}{N-1}
\cdot\Pr[Conn([u])].
$$

Putting everything together, Eq.~(\ref{eq:Conn_u}) implies
\ber
\Pr[Conn([u+1])]&\leq&
\frac{6(u+1)}{N-1}(\alpha u+4)\Pr[Conn([u])]\nonumber\\
&\leq&
\frac{6(u+1)}{N-1}(\alpha u+4)\frac{g_u^U N}{\binom{N}{u}}\nonumber\\
&\leq&6g_u^U(\alpha u+4)\frac{N}{\binom{N}{u+1}},
\eer
so that the upper bound holds with $g_1^U=1$, $g_{u+1}^U=6g_u^U(\alpha u+4)$, and $N$ sufficiently large.
\end{proof}

\section{Details of the numerical simulations\label{app:numerical}}
For each number of bits $N$ from $15$ to $200$ by steps of 5, we generated 5000 random instances with $M=\lfloor\alpha N\rfloor$ clauses using the following procedure.

We first picked $M-1$ triplets of bits (the clauses) uniformly at random and with replacement, and cleaned the corresponding EC3 instance by removing absent bits, and clauses involving two bits appearing in no other clause, which introduce trivial degeneracies in the space of solutions. We then solved the cleaned EC3 instance $P$ using the Davis-Putnam-Logemann-Loveland algorithm~\cite{dp60,dll62}, to find all of its satisfying assignments (note that this step takes exponential time and therefore prevents us from considering too large instances). We added a final random clause to create a new instance $P'$, and checked which one of the solutions of $P$ satisfied the new clause, therefore being also solutions of $P'$. For each of the solutions of $P'$, we computed the fourth order correction to the energy $E_\x^{(4)}$ (under Hamiltonian $H'(\lambda)$), and identified the solution $\x^1$ having the minimum correction, which corresponds to the ground state of $H'(\lambda)$ for $\lambda>0$. We then computed the correction $E_\x^{(4)}$ under $H(\lambda)$, not only for $\x^1$ but also for all solutions of $P$ not being solutions of $P'$, in order to identify the solution $\x^2$ having the minimum correction.
Finally, we calculated the 4-th order correction $E_{12}^{(4)}$ of the splitting between energies of assignments $\x^1$ and $\x^2$ for the instance without the added clause $P$. Indeed, if this splitting is large enough, this creates an avoided crossing between the ground state $\ket{\x^1,\lambda}$ and the excited state $\ket{\x^2,\lambda}$ of the Hamiltonian $H'(\lambda)$ for the instance with the added clause.

Note that if any of the previous steps failed for some reason (e.g. $P$ has no solution, or all solutions of $P$ are also solutions of $P'$), we rejected the instance and started over with $M-1$ new random clauses. Nevertheless, after having generated 5000 valid instances, we checked that the number of rejected instances stayed approximately constant as $N$ increases, so that our conclusions are verified with constant probability over the random instances.

The numerical simulations were run using a parallel algorithm on 2 nodes of the NECLA cloud, where each node has access to two dual-core Intel Xeon 5160 (3GHz) CPUs and 3.5GB of RAM. The full set of simulations took between 3 and 4 days to complete.

\section{Estimation of the tunneling amplitude}\label{Tunneling-amplitude-estimation}

The case of a tree provides an upper bound on the tunneling amplitude between two solutions of an Agree instance over $n$ bits, and as a consequence an upper bound on the tunneling amplitude between two solutions of an EC3 instance with Hamming distance $n$. In this appendix, we will derive an estimation (rather than an upper bound) of this tunneling amplitude. Let us consider an arbitrary Agree instance over $n$ bits with $m$ clauses. Let $\tilde{B}_i$ be the number of clauses where bit $i$ appears, and $\tilde{J}_{ij}$ be the number of clauses where bit $i$ and $j$ appear together (there is actually only one such clause for the Agree problem, $(x_i=x_j)$, but in general this clause could be repeated). Following the notations of Section~\ref{sec:upper-bound-amplitude}, let $E_j=E_{\y^{(p,j)}}$ be the energy of the bit string $\y^{(p,j)}$, obtained from $\x^{1}$ by flipping bits $p(1),p(2),\ldots,p(j)$ for a given permutation $p\in S_n$, and let $A_p=\prod_{j=1}^{n-1} (E_j)^{-1}$. We need to estimate
\ber
|A_{12}| &=&\lambda^n\sum_{p\in S_n} A_p+O(\lambda^{n+1})\nonumber\\
&=&\lambda^n n! \mean{A_p}_p+O(\lambda^{n+1}),
\eer
where the average is taken over permutations $p\in S_n$. Let us consider the approximation
\ber
 \mean{A_p}_p&=&\mean{\prod_{j=1}^{n-1} (E_j)^{-1}}_p\nonumber\\
&\approx&\prod_{j=1}^{n-1} \mean{E_j}_p^{-1}.
\eer
We need to compute the average energy $E_j$ of a bit string obtained by flipping $j$ bits from a solution. When a particular bit is flipped, each clause where the bit appears will become violated, but it will be satisfied again when the second bit in the clause is flipped, so that
\be\label{eq:energy-agree}
E_j=\sum_{k=1}^j \left( \tilde{B}_{p(k)}-\sum_{l=1}^j \tilde{J}_{p(k)p(l)}\right).
\ee
 Since, by definition, $\sum_{i=1}^n \tilde{B}_i=2m$ and $\sum_{j=1}^n \tilde{J}_{ij}=\tilde{B}_i$, we have $\mean{\tilde{B}_{p(k)}}_p=\frac{2m}{n}$ and $\mean{\tilde{J}_{p(k)p(l)}}_p=\frac{2m}{n(n-1)}$ (for any $k,l\in[n]$), and inserting in Eq.~(\ref{eq:energy-agree}) yields
$$
\mean{E_j}_p=\frac{2mj(n-j)}{n(n-1)}.
$$
This shows that the average energy barrier that has to be broken to tunnel from one solution of Agree (and by reduction of EC3) to another is shaped as a parabola, see Fig.~\ref{fig:energy-barrier}. We may now estimate the tunneling amplitude
\ber
 |A_{12}|&\approx&\lambda^n n!\prod_{j=1}^{n-1} \mean{E_j}_p^{-1}\nonumber\\
&\approx& \frac{\lambda^n n}{(n-1)!}\left(\frac{n-1}{2\beta}\right)^{n-1},
\eer
where we have defined the clauses-to-variables ratio $\beta=\frac{m}{n}$ for the Agree instance. Using the Stirling formula $n!\approx n^n e^{-n}$, we obtain for large $n$
$$
|A_{12}|\approx\frac{2\beta n}{e}\left(\frac{e\lambda}{2\beta}\right)^n,
$$
which is therefore exponentially small in $n$ as soon as $\lambda<\lambda_a\approx\frac{2\beta}{e}$.

\begin{figure}
 \centering
\includegraphics[scale=1]{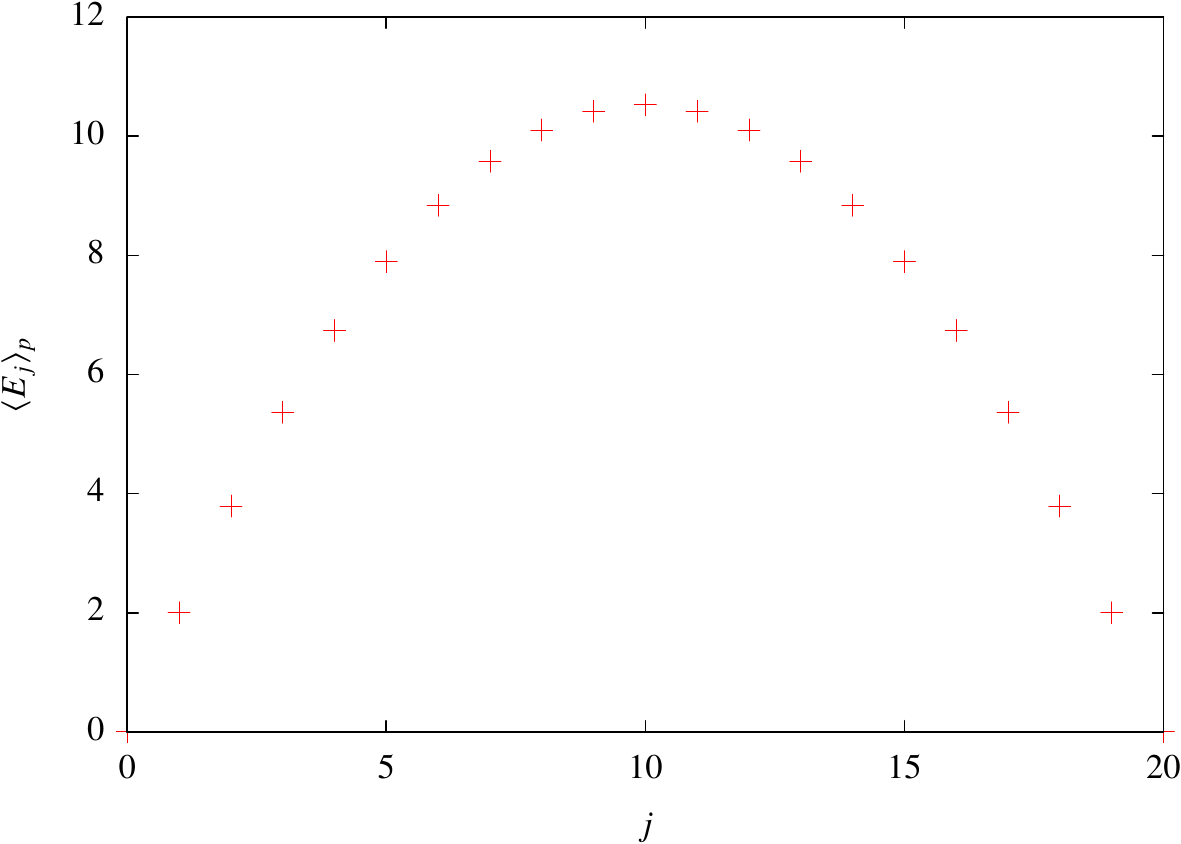}
 \caption{Average energy barrier for random paths between the two solutions of an instance of Agree with $n=20$ bits and $m=20$ clauses. $\mean{E_j}_p$ gives the average energy after flipping $j$ bits from one of the solutions.}
 \label{fig:energy-barrier}
\end{figure}

Since we are more interested in EC3 than Agree, we need to estimate the typical value of $\beta$ for the Agree instances obtained by reduction from EC3 instances. Let us consider random instances of EC3 with $N$ bits and $M=\alpha N$ clauses. We have seen in Section~\ref{sec:statistics} that the clauses forming such a random instance will typically only involve about $\typ{N}'=N(1-e^{-3\alpha})$ of the $N$ bits. Let us discard the absent bits and focus on assignments of the remaining bits satisfying all clauses, that is, the solutions of the instance. Since, for each clause of $3$ bits, one of the bits has to be set to 1 and the other two to 0, we expect that solutions will typically involve about $\frac{\typ{N}'}{3}$ 1's and $\frac{2\typ{N}'}{3}$ 0's. Therefore, if we look at a random bit $x_i$ in a solution, we would expect that $\Pr_i[x_i=1]=\frac{1}{3}$. Let us now consider the joint distribution $p_{b^1b^2}=\Pr_i[(x^{1}_i,x^{2}_i)=(b^1,b^2)]$ for two solutions $\x^{1},\x^{2}$.
 For instances with large $N$ and $\alpha$ close to the satisfiablity threshold, the solutions will be essentially independent, so we can expect $p_{b^1b^2}=\Pr_i[x^{1}_i=b^1]\cdot\Pr_i[x^{2}_i=b^2]$, that is, $p_{00}=\frac{4}{9}$, $p_{01}=p_{10}=\frac{2}{9}$ and $p_{11}=\frac{1}{9}$. From there, we can estimate the typical Hamming distance between two solutions of a random EC3 instance, $\typ{n}=\typ{N}'(p_{01}+p_{10})=\frac{4\typ{N}'}{9}$.

Out of the $M$ clauses, we now need to estimate the number of relevant clauses, that is those involving bits that differ in the two solutions, since these are the clauses corresponding to agree clauses in the Agree instance obtained by reduction. Let us consider the set of all clauses satisfied by the two solutions, and assume that we sample $M$ clauses uniformly at random from this set. We say that a bit $x_i$ is of type $b^1b^2$ when $(x^{1}_i,x^{2}_i)=(b^1,b^2)$. A relevant clause involves one $00$-type bit, one $01$-type bit and one $10$-type bit. Therefore, there are typically about $M_{\mathrm{rel}}=6 p_{00}p_{01}p_{10}(\typ{N}')^3$ possible relevant clauses. Similarly, an irrelevant clause involves one $11$-type bit and two $00$-type bits, so there are typically about $M_{\mathrm{irr}}=3p_{11}p_{00}^2(\typ{N}')^3$ irrelevant clauses. A random clause satisfied by the two solutions will therefore be relevant with probability $\frac{M_{\mathrm{rel}}}{M_{\mathrm{rel}}+M_{\mathrm{irr}}}=\frac{2}{3}$. As a consequence, when picking $M$ random clauses, there will typically be around $\typ{m}=\frac{2M}{3}$ relevant clauses. Finally, the typical clauses-to-variables ratio for the Agree instance obtained by reduction may be estimated as $\typ{\beta}=\frac{\typ{m}}{\typ{n}}=\frac{3}{2}\alpha\frac{e^{3\alpha}}{e^{3\alpha}-1}$. Back to our original concern about the tunneling amplitude, we see that it will be exponentially small for $\lambda<\lambda_a$, where
$$
\lambda_a\approx\frac{2\typ{\beta}}{e}=\frac{3\alpha e^{3\alpha-1}}{e^{3\alpha}-1}.
$$
For $\alpha=0.62$ (the value we used in our numerical simulations), we obtain $\lambda_a\approx 0.81$.

\section{Lower bound on the minimum gap}\label{sec:lower-bound}
\begin{lemma}[\cite{vdam2}]
 Let $H(s)=(1-s)H_0+s H_P$ be a Hamiltonian on $N$ qubits where $H_0=-\sum_{i=1}^N \sigma_x^{(i)}$, and $H_P=\sum_{\x\in\{0,1\}^N} E_\x \proj{\x}$ with $E_{\x^0}=0$ for some $\x^0\in\{0,1\}^N$ and $1\leq E_\x\leq M=O(N)$ for all $\x\neq \x^0$. Then, the eigenvalue gap of $H(s)$ satisfies $\Delta(s)\geq 2^{-O(N\log N)}$ for all $0\leq s\leq 1$.
\end{lemma}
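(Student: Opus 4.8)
The plan is to fix $s\in[0,1]$ and bound $\Delta(s)$ from below, disposing first of the two endpoint regimes, in which $\Delta(s)=\Omega(1)$ by Weyl's inequality alone. Near $s=0$: $\mathrm{gap}(H_0)=2$ and $\norm{sH_P}\le sM=O(Ns)$ give $\Delta(s)\ge 2(1-s)-2sM=\Omega(1)$ for $s\le c_0/N$. Near $s=1$: with $\tilde H(\lambda)=H_P+\lambda H_0$, $\lambda=(1-s)/s$, and $\Delta(s)=s\,\mathrm{gap}(\tilde H(\lambda))$, the bound $|\mu_k(\tilde H(\lambda))-\mu_k(H_P)|\le\lambda\norm{H_0}=\lambda N$ together with $\mu_1(H_P)=0$, $\mu_2(H_P)\ge1$ gives $\mathrm{gap}(\tilde H(\lambda))\ge 1-2\lambda N$, hence $\Delta(s)=\Omega(1)$ for $1-s\le c_1/N$. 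So the content is the interior $c_0/N\le s\le 1-c_1/N$.

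For the interior I would split according to the nature of the low-lying states. In the localized part --- $1-s$ at most an inverse polynomial, which by Section~\ref{sec:applicability} is where the dangerous avoided crossings actually sit --- a very small $\Delta(s)$ signals an avoided crossing between two eigenstates localized near two computational-basis configurations $\x^0,\x^1$; projecting onto their span and taking a Schur complement against the rest yields a $2\times2$ effective Hamiltonian whose gap is (essentially) $2|t|$, with $t$ the tunneling matrix element. Because $H_P$ has its only zero at $\x^0$ (all other $E_\x\ge1$) and $E_0(s)\le\bra{\x^0}H(s)\ket{\x^0}=0$, every intermediate configuration on a shortest hypercube path from $\x^0$ to $\x^1$ has $H_P$-energy strictly above the crossing energy, so the corresponding resolvent denominators all share one sign and the $n!$ monotone paths ($n=d_H(\x^0,\x^1)\le N$) add without cancellation; each step costs a factor $1-s$ over a denominator of size $O(N)$, giving $|t|\ge\bigl((1-s)/O(N)\bigr)^{n}$, and since $1-s\ge c_1/N$ this is $\ge 2^{-O(N\log N)}$ provided the longer paths (controlled by the spectral separation of the rest of the spectrum from the crossing, and when the naive series fails to converge by the Section~\ref{sec:applicability} localization analysis) do not overcome it; the sharper estimate of Appendix~\ref{Tunneling-amplitude-estimation} can be borrowed here. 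A factor $N^{-\Theta(1)}$ per step over $\Theta(N)$ steps is exactly what produces the $N\log N$ in the exponent, and it uses $M=O(N)$.

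The step I expect to be the main obstacle is the complementary ``extended'' part of the interior, where the naive locator/path expansion diverges (it converges only for $1-s=O(1/N)$), where there is no \emph{a priori} polynomial bound on how many levels cluster near the gap, and where the relevant subspace need not be spanned by localized basis states at all --- witness the Grover-type instance $E_\x=1$ for all $\x\neq\x^0$, with minimum gap $\sim2^{-N/2}$ attained near $s=\tfrac12$ and relevant subspace spanned by the \emph{extended} state $\ket{\psi_0}$ and $\ket{\x^0}$. The robust but far too weak substitute is the algebraic argument: at rational $s=p/q$ the matrix $qH(s)$ has integer entries, so its eigenvalues are algebraic integers with all conjugates bounded by $O(Nq)$ and the relevant (squarefree part of the) characteristic polynomial has nonzero integer discriminant, whence $\Delta(s)\ge q^{-1}\,\mathrm{poly}(N)^{-O(4^{N})}$ --- doubly exponential. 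To reach $2^{-O(N\log N)}$ one must exploit that $H(s)$ is $O(N)$-sparse and geometrically local to cut the effective algebraic degree governing the two lowest eigenvalues from $2^N$ down to $\mathrm{poly}(N)$ (morally, that these eigenvalues are fixed by a bounded-complexity ``effective'' problem), and then combine this with the localized-regime estimate. That combination --- or, alternatively, a direct proof that $\Delta(s)$ simply fails to be exponentially small whenever the path picture breaks down --- is the crux of the lemma, and the part I would spend the most effort on.
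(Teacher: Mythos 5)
There is a genuine gap, and you have named it yourself: your argument covers only the two endpoint windows $s\le c_0/N$ and $1-s\le c_1/N$ (where Weyl/Gershgorin gives $\Omega(1)$ or $\Omega(1/N)$) plus a heuristic tunneling estimate in a ``localized'' window, while the entire intermediate regime is left open with only a doubly exponential algebraic fallback, $q^{-1}\mathrm{poly}(N)^{-O(4^N)}$, which is far weaker than the claimed $2^{-O(N\log N)}$. Even the localized-regime piece is not a proof as written: the Schur-complement reduction to a $2\times 2$ block, the no-cancellation claim for the $n!$ paths, and the control of longer paths when the locator expansion diverges are exactly the points that would need justification, and the lemma demands a bound valid for \emph{every} $s$, including your Grover-type example where the low-energy subspace is extended.

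The missing idea is that no localized/extended dichotomy (and no reduction of ``effective algebraic degree'') is needed at all. The paper's proof for all $s\le\frac{2N+1}{2N+2}$ is a Perron--Frobenius-type argument on the shifted, rescaled matrix $Q=\frac{-H(s)+sM\id}{1-s}=A+\lambda(M\id-H_P)$, where $A=-H_0$ is the hypercube adjacency matrix and $\lambda=\frac{s}{1-s}$: all entries of $Q$ are nonnegative, and by the connectivity/mixing of the hypercube all entries of $Q^N$ are at least $1$, which forces $\mu_0^N-\mu_1^N\ge 1$ for the two largest eigenvalues of $Q$. Since every $\mu_i\le\lambda M+N$ and $\mu_0^N-\mu_1^N\le(\mu_0-\mu_1)(\mu_0+\mu_1)^{N-1}$, one gets $\mu_0-\mu_1\ge(2\lambda M+2N)^{-(N-1)}$, hence $\Delta(s)=(1-s)(\mu_0-\mu_1)\ge 2^{-O(N\log N)}$ using $\lambda\le 2N+1$ and $M=O(N)$; the remaining window $s\ge\frac{2N+1}{2N+2}$ is handled by Gershgorin exactly as in your endpoint analysis. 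The $N\log N$ in the exponent thus comes from comparing $N$-th powers of an $O(MN)$-bounded nonnegative matrix, not from per-step resolvent factors along tunneling paths, and the argument is uniform in $s$ — precisely the robustness your proposal lacks. I would therefore replace your interior analysis with this global spectral argument rather than trying to patch the localized/extended split.
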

\begin{proof}
 Let us first consider the case $s\geq\frac{2N+1}{2N+2}$. Note that the diagonal elements of $H(s)$ are given by $H_{\x\x}=s E_\x$, and the non-diagonal elements satisfy $\sum_{\y\neq \x} H_{\x\y}=-(1-s)N$. Therefore, all Gershgorin circles have radius $(1-s)N$ and the circle around $H_{\x^0\x^0}=0$ is disjoint from the circles around $H_{\x\x}\geq s$ (for $\x\neq \x^0$) as soon as $(1-s)N<s-(1-s)N$, that is, $s>\frac{2N}{2N+2}$. In that case, the smallest eigenvalue lies in the circle around $H_{\x^0\x^0}=0$, and all others in the union of the other circles, so that we have for the eigenvalue gap $\Delta(s)=E_1(s)-E_0(s)\geq s-2(1-s)N\geq \frac{1}{2N+2}\geq\Omega(\frac{1}{N})$ for $s\geq\frac{2N+1}{2N+2}$.

Let us now consider the case $s\leq\frac{2N+1}{2N+2}$. Let $Q=\frac{-H(s)+sM\id}{1-s}=A+\lambda(M\id-H_P)$, where $A=-H_0$ is the adjacency matrix of the hypercube and $\lambda=\frac{s}{1-s}$. Note that since $E_\x\leq M$, all elements of this matrix are non-negative. By the mixing properties of the hypercube, all elements of $A^N$ are at least 1, and therefore this is also true for $Q^N$. This implies that the gap between the largest and second largest eigenvalue of $Q^N$ is at least $1$, that is,
$\mu_0^N-\mu_1^N\geq 1$, where $\mu_0$ and $\mu_1$ are the largest and second largest eigenvalue of $Q$. The eigenvalues $\mu_i$ of $Q$ are upper-bounded by the spectral radius $\rho(Q)$, which in term is upper bounded by the norm $\norm{Q}_1=\max_\x\sum_{\y}|Q_{\x\y}|$, so that $\mu_i\leq \lambda M+N$ for all $i$. Moreover, we have $\mu_0^N-\mu_1^N\leq (\mu_0-\mu_1)(\mu_0+\mu_1)^{N-1}$, so that
$$
\mu_0-\mu_1\geq\frac{1}{(\mu_0+\mu_1)^{N-1}}\geq\frac{1}{(2\lambda M+2N)^{N-1}}\geq \frac{1}{(4 MN+2M+2N)^{N-1}},
$$
where we used the fact that $\lambda\leq 2N+1$ when $s\leq\frac{2N+1}{2N+2}$. Finally, we have $$
\Delta(s)=(1-s)(\mu_0-\mu_1)\geq \frac{1}{(2N+2)(4 MN+2M+2N)^{N-1}}\geq 2^{-O(N\log N)}.
$$
\end{proof}
\end{document}